\documentclass[journal]{IEEEtran}

\usepackage{amsmath}
\usepackage{amsfonts}
\usepackage{amssymb}
\usepackage{amsthm}
\usepackage{mathrsfs} 

\usepackage{url}
\usepackage[colorlinks]{hyperref}

\newtheorem{proposition}{Proposition}
\newtheorem{lemma}{Lemma}

\usepackage{graphicx}
\usepackage{xcolor}
\usepackage{subfigure}

\makeatletter
\def\blfootnote{\xdef\@thefnmark{}\@footnotetext}
\makeatother

\begin{document}

\title{Enormous Fluid Antenna Systems (E-FAS) for Wireless Sensing: Channel Modeling and Conditional Estimation Limits} 

\author{Farshad~Rostami~Ghadi,~\IEEEmembership{Senior Member},~\textit{IEEE}, 
Kai-Kit~Wong,~\IEEEmembership{Fellow},~\textit{IEEE},\\ 
Jos\'e D. Vega-S\'anchez,~\IEEEmembership{Senior Member},~\textit{IEEE}, 
Kin-Fai Tong,~\IEEEmembership{Fellow},~\textit{IEEE}, and 
Hyundong Shin,~\IEEEmembership{Fellow,~IEEE}
\vspace{-5mm}
}

\maketitle

\begin{abstract}
In this paper, we develop a fundamental analytical framework for integrated sensing and communications (ISAC) enabled by the Enormous Fluid Antenna System (E-FAS), which transforms a collection of coordinated intelligent surfaces into a gigantic reconfigurable electromagnetic aperture, with particular emphasis on the limits of angular sensing. We begin by developing a bidirectional sensing channel model that explicitly captures the complete sensing process, including surface-wave (SW) routing, distributed reradiation, target scattering, and echo propagation. Based on this channel model, we formulate a parametric observation model for target sensing and derive the associated Fisher information matrix (FIM) and Cram\'{e}r-Rao bound (CRB) for angular estimation. The analysis demonstrates that E-FAS gives rise to a fundamentally different sensing regime compared with conventional array-based and reconfigurable-surface-aided ISAC architectures. Our analysis uncovers that maximizing coherent routing gain does not necessarily maximize sensing performance, exposing a fundamental trade-off between SW routing gain and sensing diversity in programmable propagation environments. Numerical results validate the developed framework and demonstrate that E-FAS-enabled ISAC systems can achieve substantial angular sensing gains over conventional architectures under the same transmit-power budget. The results further underscore the importance of jointly optimizing propagation routing and sensing functionality, positioning E-FAS as a new paradigm for ISAC.
\end{abstract}

\begin{IEEEkeywords}
Fluid antenna system (FAS), integrated sensing and communications (ISAC), channel modeling, Fisher information matrix, Cram\'{e}r-Rao bound, surface wave routing.
\end{IEEEkeywords}

\blfootnote{The work of F. Rostami Ghadi and K. K. Wong is supported by the Engineering and Physical Sciences Research Council (EPSRC) under Grant EP/W026813/1. The work of J. D. Vega-S$\acute{\rm a}$nchez was supported in part by the Universidad San Francisco de Quito through the Poli-Grants Program under Grant 41989 and Grant 30502. The work of H. Shin is supported by the National Research Foundation of Korea (NRF) grant funded by the Korean government (MSIT) (RS-2025-00556064 and RS-2025-25442355), and by the Ministry of Science and ICT (MSIT), Korea, under the ITRC (Information Technology Research Center) support program (IITP-2025-RS-2021-II212046), supervised by the IITP (Institute for Information \& Communications Technology Planning \& Evaluation).}

\blfootnote{\noindent F. Rostami Ghadi and K. K. Wong are with the Department of Electronic and Electrical Engineering, University College London, WC1E 7JE London, United Kingdom. K. K. Wong is also affiliated with the Department of Electronic Engineering, Kyung Hee University, Yongin-si, Gyeonggi-do 17104, Republic of Korea (e-mail: {f.rostamighadi, kai-kit.wong}@ucl.ac.uk).}
\blfootnote{J. D. Vega-S$\acute{\rm a}$nchez is with the Colegio de Ciencias e Ingenier\'{i}as ``El Polit\'{e}cnico'' Universidad San Francisco de Quito (USFQ), Quito 170157, Ecuador (email: dvega@usfq.edu.ec).}
\blfootnote{K.-F. Tong is with the School of Science and Technology, Hong Kong Metropolitan University, Hong Kong SAR, China (e-mail: ktong@hkmu.edu.hk).}
\blfootnote{H. Shin is with the Department of Electronics and Information Convergence Engineering, Kyung Hee University, Yongin-si, Gyeonggi-do 17104, Republic of Korea (e-mail: hshin@khu.ac.kr).}

\blfootnote{Corresponding Author: Kai-Kit Wong.}

\vspace{-2mm}
\section{Introduction}\label{sec-intro}
\IEEEPARstart{I}{ntegrated} sensing and communications (ISAC) has emerged as a key enabling technology for next-generation networks, a.k.a.~sixth-generation (6G), where communication and sensing functionalities are realized over shared hardware, spectrum, and signaling resources \cite{liu2022integrated}. By jointly designing transmission strategies, ISAC systems aim to improve spectral efficiency while enabling accurate environmental perception, with applications in autonomous systems, smart environments, and next-generation mobile networks \cite{wei2022toward,wei2023isac}. Consequently, significant research efforts have been devoted to developing signal processing, waveform design, and performance analysis frameworks for ISAC systems. Existing ISAC architectures are predominantly based on conventional antenna arrays \cite{liu2020joint} or more recently, on reconfigurable intelligent surfaces (RISs) \cite{Yu2024active,zhang2024joint,chen2024sim}. While these approaches provide important degrees of freedom (DoF) for beamforming and propagation control, they remain fundamentally limited by the underlying propagation environment \cite{di2020smart}. In particular, signal attenuation, and limited angular diversity constrain both communication performance and sensing accuracy, especially in complex scenarios.

Recently, fluid antenna systems (FASs) have emerged as a reconfigurable physical-layer architecture for adaptive wireless propagation control, where antenna positions or effective radiation locations can be dynamically reconfigured to exploit favorable channel conditions \cite{new2025tutorial,new2025flar,Lu-2025,hong,wu2024flu}. Unlike conventional fixed-array architectures, FAS-enabled systems introduce an additional spatial DoF through adaptive antenna positioning, thereby improving channel diversity, interference management, and spatial beamforming capability \cite{Wong2023FAS,wong2020performe}. Motivated by these advantages, recent studies have extended FAS concepts to ISAC systems, demonstrating the potential of fluid antenna architectures for improving sensing resolution and communication reliability in dynamic propagation environments \cite{wang2024fluid,zhou2024fluid,zou2024shifting,ghadi2025performISAC,zhang2026fund,ghadi2026ai}. In parallel, FAS assisting RIS \cite{ghadi2024on,tang2025power,yao2026fram} and reconfigurable electromagnetic architectures \cite{xiao2025fluid,ghadi2025fris,ghadi2025fires,salem2025first,kaveh2026physical,xiao2025from} have also attracted increasing attention as a means of enhancing propagation adaptability and spatial coverage. These approaches aim to combine the propagation control capability of RISs with the spatial flexibility of fluid  antenna structures. Nevertheless, most existing FAS and RIS architectures still rely primarily on conventional propagation mechanisms and remain sensitive to blockage, attenuation, and limited spatial diversity.

Inspired by the emerging concept of fluid and reconfigurable propagation engineering, enormous fluid antenna systems (E-FASs) have recently been proposed as a paradigm shift for large-scale manipulation of wireless propagation, extending the use of FAS in a large scale \cite{wong2021vision,wong2025enormous,wu-efas2026}. Specifically, an E-FAS is an ensemble of one or more coordinated intelligent surfaces that helps establish communication between the transmitter(s) and receiver(s). Unlike conventional RIS architectures that primarily rely on anomalous reflection, E-FAS enables electromagnetic waves to be coupled into guided surface wave (SW) modes that propagate along engineered quasi two-dimensional structures before being reradiated toward intended users. Hence, the metasurfaces act as programmable electromagnetic interfaces that support controlled energy transport and spatial wave manipulation. Since the guided SWs experience cylindrical rather than spherical spreading, propagation attenuation can be significantly reduced compared to free-space transmission \cite{Liu-access2024}. This mechanism enables signals to be routed around obstacles, redirected along walls or ceilings, and delivered to desired regions with enhanced efficiency and coverage. Existing studies on E-FAS have demonstrated substantial gains in communication performance, energy efficiency, and new multiuser transmission capabilities.

Despite the initial vision of SW communication in \cite{wong2021vision}, many fundamental aspects of E-FAS design and performance remain unexplored. Early discussions in \cite{wong2021vision} introduced the concept of guiding electromagnetic waves along engineered surfaces to improve propagation efficiency, interference control and coverage. More recently, \cite{wong2025enormous,wu-efas2026} presented the broader E-FAS architecture and highlighted its potential for future wireless systems. Initial analytical studies in \cite{ghadi2026enormous} demonstrated promising gains in outage probability, ergodic rate, and multiuser transmission capability through SW-assisted communication. Furthermore, channel estimation aspects of E-FAS were investigated in \cite{ghadi2026efas1}, while the impact of correlated SW leakage on physical layer security was addressed in \cite{ghadi2026e-fas}. These early results collectively suggest that E-FAS fundamentally reshape wireless propagation and transmission design.

\vspace{-2mm}
\subsection{Motivation and Contributions}
Despite the aforesaid research advances, the potential of E-FAS for ISAC systems remains largely unexplored. Existing studies primarily focus on communication performance benefits and do not investigate how guided SW routing and spatially distributed reradiation influence sensing functionality. Meanwhile, conventional ISAC models are typically developed for direct propagation or reflection-based architectures and therefore do not capture the bidirectional and routing-dependent propagation structure induced by E-FAS. As a consequence, the fundamental sensing behavior of E-FAS-enabled ISAC systems, including the impact of routing gain, routing correlation, and radiation geometry on estimation performance, remains poorly understood. Addressing this challenge requires a new analytical framework beyond conventional array sensing models. In E-FAS-enabled systems, the sensing channel is inherently composite and is formed through the interaction of guided SW propagation, distributed reradiation, target scattering, and return echo routing. As a result, the effective sensing manifold depends not only on the physical array structure, but also on the underlying routing configuration of the E-FAS. This introduces new sensing characteristics and new trade-offs that do not arise in conventional ISAC architectures.

In this paper, we develop an estimation-theoretic framework for E-FAS-enabled ISAC systems with emphasis on the fundamental limits of angular sensing. The main contributions of this work are summarized as follows:
\begin{itemize}
\item First, we develop a conditional end-to-end sensing channel model for E-FAS-enabled ISAC systems that captures guided SW routing, distributed reradiation, target  scattering, and echo reception. The proposed formulation explicitly characterizes the forward and return propagation processes and therefore establishes a routing-dependent sensing structure which is fundamentally different from conventional array and reflection ISAC models.
\item We formulate a conditional parametric observation model for angular estimation and derive the Fisher information and Cram\'{e}r-Rao bound (CRB), from which we uncover how the sensing performance depends jointly on routing coefficients, propagation geometry, and channel structure.
\item We characterize the effect of routing-induced correlation on the effective sensing dimensionality of the E-FAS architecture. In particular, we establish sufficient conditions under which strong routing alignment and channel correlation lead to rank collapse of the sensing channel and reduce the number of effective sensing modes.
\item Moreover, we derive scaling relationships that connect sensing performance with SW routing gain, spatial geometry, and routing-induced correlation. The analysis shows that increasing routing concentration can improve signal amplification while simultaneously reducing spatial diversity, thereby revealing a fundamental gain-diversity trade-off unique to E-FAS-enabled sensing systems.
\item Our numerical results validate our analytical framework and illustrate the impact of routing configuration and radiation geometry on sensing accuracy. The results further illustrate the potential advantages of E-FAS-enabled ISAC architectures compared with conventional propagation structures under comparable transmit power conditions.
\end{itemize}


\vspace{-2mm}
\section{System Model}\label{sec:sys}
As shown in Fig.~\ref{fig:system}, we consider a monostatic ISAC system in which a base station (BS) is equipped with $N_{\mathrm{t}}$ fixed transmit antennas and $N_{\mathrm{r}}$ fixed receive antennas. The BS performs joint transmission and reception over a shared propagation environment assisted by an E-FAS infrastructure. The E-FAS consists of coordinated electromagnetic surfaces that support incident wave capture, guided SW propagation, and controlled reradiation into free-space at selected locations. Consequently, the end-to-end propagation between the BS and external nodes is realized through a combination of guided SW routing and short-range free-space transmission \cite{ghadi2026enormous}. The E-FAS-induced reradiation points are modeled as a finite set of effective radiation locations. These locations represent equivalent emission points determined by the underlying routing configuration and do not correspond to independent physical antenna elements. Instead, they provide an analytical abstraction of the spatial radiation structure generated by the E-FAS-assisted propagation environment. In particular, we consider a single point target and a single communication user. This setting isolates the fundamental sensing structure induced by the E-FAS architecture while maintaining analytical tractability.\vspace{-4mm}

\begin{figure}[t]
\centering
\includegraphics[width=.85\columnwidth]{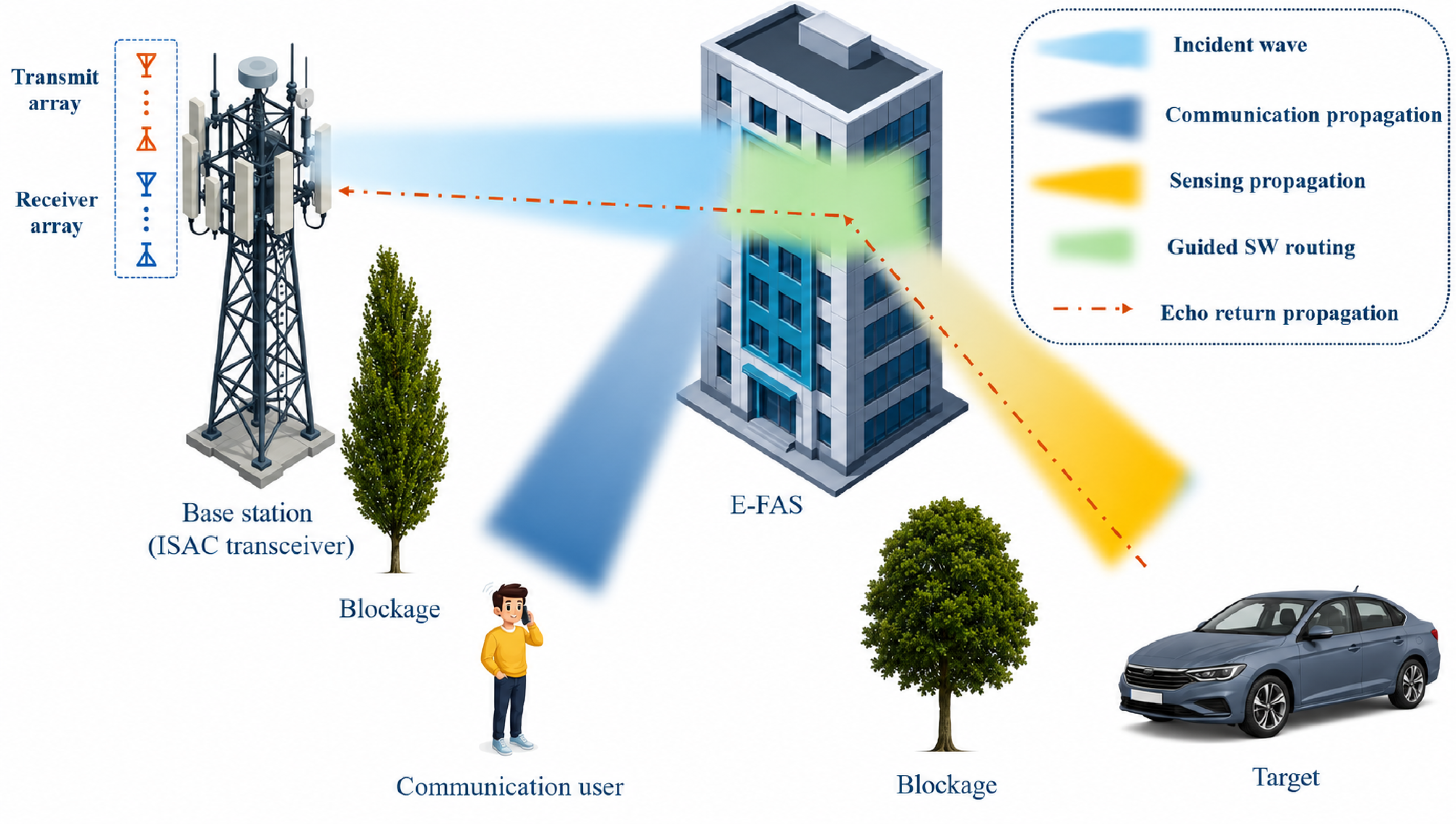}
\caption{Illustration of the E-FAS-enabled ISAC architecture.}\label{fig:system}
\vspace{-6mm}
\end{figure}

\vspace{-2mm}
\subsection{Geometric Model}
We consider a Cartesian coordinate system and denote the position of the BS by $\mathbf{p}_{\mathrm{B}} \in \mathbb{R}^{3}$, the point target by $\mathbf{p}_{0} \in \mathbb{R}^{3}$, and the communication user by $\mathbf{p}_{\mathrm{u}} \in \mathbb{R}^{3}$. The E-FAS infrastructure induces a finite set of effective radiation points indexed by $\mathcal{M} \triangleq \{1,2,\ldots,M\}$, with corresponding positions $\mathbf{p}_{m} \in \mathbb{R}^{3}$, $m \in \mathcal{M}$. These radiation points represent equivalent emission locations through which the guided SW energy is coupled back into free space. For each radiation point $m$, we define the geometric distances from the point to the target and from the point to the user, respectively, as
\begin{align}
d_{m,0} &\triangleq \|\mathbf{p}_{m}-\mathbf{p}_{0}\|_{2}, \label{eq-dism}\\
d_{m,\mathrm{u}} &\triangleq \|\mathbf{p}_{m}-\mathbf{p}_{\mathrm{u}}\|_{2}.
\end{align}
The positions of the radiation points are assumed to be known through infrastructure design or prior calibration, such that the unknown parameters are associated with the target location.

We also define the target parameter vector as 
\begin{align}
\boldsymbol{\eta} \triangleq [\tau_{0}, \theta_{0}]^{T},
\end{align} 
where $\tau_{0}$ is the round trip propagation delay and $\theta_{0}$ defines the angular parameter of interest \cite{trees2002optimum,stoica1989music}. The extension to vector-valued angle parameters is straightforward. For the angular sensing analysis later, the target angle $\theta_0$ is defined with respect to a reference coordinate system centered at the BS. In particular, the target position can be expressed as
\begin{align}
\mathbf p_0=\mathbf p_{\mathrm B}+r_0\mathbf u(\theta_0),
\end{align}
where $r_0$ denotes the target range and $\mathbf u(\theta_0)$ is the corresponding unit-direction vector. Therefore, the propagation distances, $d_{m,0}$, and the free-space coefficients, $\ell_{m,0}(\theta_0)$, inherit an implicit dependence on $\theta_0$ through the target location $\mathbf p_0$.

\vspace{-2mm}
\subsection{Routing State Representation}
The propagation through the E-FAS is characterized by a routing state that captures the effective transfer from the BS to each radiation point through guided SW paths \cite{wong2025enormous,ghadi2026enormous}. We define the routing state vector as
\begin{align}
\boldsymbol{\rho} \triangleq [\rho_{1},\rho_{2},\ldots,\rho_{M}]^{T}, \label{eq-rho_v}
\end{align}
where $\rho_{m} \in \mathbb{C}$ is the effective routing coefficient associated with the $m$-th radiation point. In the narrowband model adopted for the main analysis, we write $\rho_{m}$ as
\begin{align}
\rho_{m} = \sqrt{G_{m}^{(\mathrm{sw})}} e^{-j\phi_{m}^{(\mathrm{sw})}},\label{eq-rho_m}
\end{align}
where $G_{m}^{(\mathrm{sw})}\ge 0$ is the effective guided wave power transfer factor and $\phi_{m}^{(\mathrm{sw})}$ is the
corresponding accumulated phase.

The coefficient $\rho_m$ is an equivalent composite quantity that captures the net effect of SW guided propagation, attenuation, junction loss, and coupling along the underlying SW route. As a result, \eqref{eq-rho_m} is interpreted as a reduced order representation of a potentially multi-segment routing process rather than as a freely adjustable reflection coefficient. Hence, to connect this abstraction with the underlying propagation physics, a frequency-dependent refinement can be introduced as
\begin{align}
\rho_m(f)=\sqrt{G_m(f)}\,e^{-j(2\pi f\tau_m+\phi_m^{(c)})},\label{eq:rho_f}
\end{align}
where $\tau_m$ denotes the effective guided wave propagation delay, $\phi_m^{(c)}$ denotes an additional coupling-induced phase offset, and $G_m(f)$ is the frequency-dependent attenuation factor which can be physically modeled as
\begin{align}
G_m(f)=\exp\!\big(-\alpha(f)L_m\big),\label{eq:Gm_f}
\end{align}
where $L_m$ is the effective routing path length and $\alpha(f)$ is the surface attenuation coefficient \cite{pozar2011micro}. The narrowband model in \eqref{eq-rho_m} is recovered by evaluating \eqref{eq:rho_f} at the carrier frequency. Throughout this paper, the routing state vector $\boldsymbol{\rho}$ is assumed quasi-static over the sensing interval, and the analysis is conducted conditioned on a given routing realization. Nevertheless, when analyzing structural effects such as routing-induced coupling and mode correlation, the routing coefficients may also be interpreted as realizations drawn from an ensemble of feasible routing configurations. This ensemble viewpoint is introduced only for second order statistical characterization and does not alter the conditional observation model.

\vspace{-2mm}
\subsection{Routing-Induced Correlation}
As different guided routes may partially share common SW segments, the routing coefficients may exhibit statistical dependence across different radiation points \cite{ghadi2026e-fas}. To characterize this effect, we introduce the routing correlation matrix
\begin{align}
\mathbf{C}_{\rho}\triangleq\mathbb{E}\!\left[\boldsymbol{\rho}\boldsymbol{\rho}^{H}\right],
\end{align}
where the expectation is taken over an ensemble of feasible routing configurations, coupling conditions, and environmental perturbations associated with different operating states of the E-FAS infrastructure.

Specifically, the $(m,n)$-th entry of $\mathbf{C}_{\rho}$ is expressed as
\begin{align}
c_{m,n}=\sqrt{\gamma_m\gamma_n}\,\xi_{m,n},~|\xi_{m,n}| \le 1,
\end{align}
where $\gamma_m \triangleq \mathbb{E}[|\rho_m|^2]$ represents the average routing power associated with the $m$-th radiation point, and $\xi_{m,n}$ denotes the normalized routing correlation coefficient between the $m$-th and $n$-th routing paths.

\vspace{-2mm}
\subsection{Correlated Channel Model}
We suppose that $\mathbf{h}_{\mathrm{B},m}\in\mathbb{C}^{N_{\mathrm{t}}\times 1}$ denotes the effective BS-to-surface coupling channel associated with the route terminating at the $m$-th radiation point. Also, it is assumed that $\mathbf{g}_{m,\mathrm{R}}\in\mathbb{C}^{N_{\mathrm{r}}\times 1}$ defines the corresponding channel at the receive side from the $m$-th radiation point through the E-FAS return path to the BS. To maintain analytical tractability, the dependence among different routes is modeled at the level of second-order statistics \cite{shiu2000fading,kermoal2002stochastic}. In particular, we assume
\begin{align}
\mathbb{E}\!\left[\mathbf{h}_{\mathrm{B},m}\mathbf{h}_{\mathrm{B},n}^{H}\right]&=\beta^{(\mathrm{t})}_{m,n}\mathbf{R}_{\mathrm{t}},
\label{eq:tx_corr_model}\\
\mathbb{E}\!\left[\mathbf{g}_{m,\mathrm{R}}\mathbf{g}_{n,\mathrm{R}}^{H}\right]&=\beta^{(\mathrm{r})}_{m,n}\mathbf{R}_{\mathrm{r}},\label{eq:rx_corr_model}
\end{align}
in which $\mathbf{R}_{\mathrm{t}} \succeq \mathbf{0}$ and $\mathbf{R}_{\mathrm{r}} \succeq \mathbf{0}$ denote the BS transmit- and receive-side spatial covariance matrices, respectively, while $\beta^{(\mathrm{t})}_{m,n}$ and $\beta^{(\mathrm{r})}_{m,n}$ quantify
routing-induced dependence across different paths. By construction, we have
\begin{align}
\beta^{(\mathrm{t})}_{m,m}=\beta^{(\mathrm{r})}_{m,m}=1,~|\beta^{(\mathrm{t})}_{m,n}|\le 1,\;|\beta^{(\mathrm{r})}_{m,n}|\le 1.
\end{align}

It is worth noting that the model in \eqref{eq:tx_corr_model} and \eqref{eq:rx_corr_model} above separates BS array correlation from routing-induced coupling, thereby enabling a tractable characterization of the sensing structure induced by the E-FAS architecture.

\vspace{-2mm}
\subsection{Forward and Return Propagation}
Define the effective steering vector at the  transmit side as
\begin{align}
\mathbf{a}_{\mathrm{f}}^{H}(\mathbf{p}_{0})=\sum_{m=1}^{M}\rho_m \ell_{m,0}\mathbf{h}_{\mathrm{B},m}^{H},\label{eq:tx_efas_steering_corr}
\end{align}
where
\begin{align}
\ell_{m,0}=\frac{\lambda}{4\pi d_{m,0}}\exp\!\left(-j\frac{2\pi}{\lambda}d_{m,0}\right)
\end{align}
represents the free-space propagation coefficient from the $m$-th radiation point to the target \cite{liu2023integrated,hua2022mimo}. This final hop is modeled as a short-range free-space segment, while the dominant long-range routing effect is absorbed into $\rho_m$.

Similarly, the steering vector at the  receive side is given by
\begin{align}
\mathbf{a}_{\mathrm{b}}(\mathbf{p}_{0})=\sum_{m=1}^{M}\tilde{\rho}_m \bar{\ell}_{0,m}\mathbf{g}_{m,\mathrm{R}},\label{eq:rx_efas_steering_corr}
\end{align}
where $\tilde{\rho}_m$ denotes the effective return path routing coefficient and $\bar{\ell}_{0,m}$ denotes the free-space propagation coefficient from the target to the vicinity of the $m$-th radiation point. In reciprocal propagation environments, one may have
$\tilde{\rho}_m=\rho_m$ \cite{meng2024intelligent}. However, the more general formulation adopted here allows the effective forward and return routing states to differ in order to capture asymmetric routing configurations, coupling effects, and hardware non-idealities. Also, under reciprocal free-space propagation, $\bar{\ell}_{0,m}$ admits the same functional form as $\ell_{m,0}$. In general, $\tilde{\rho}_m$ need not be identical to $\rho_m$, since the transmit and receive routing paths may differ because of asymmetric coupling, routing, or hardware conditions.

\vspace{-2mm}
\subsection{Transmit Signal Model}
The BS uses a unified transmit signal for communication and sensing \cite{liu2020joint}. The transmitted baseband signal is written as
\begin{align}
\mathbf{x}(t)=\mathbf{W}\mathbf{s}(t)\in\mathbb{C}^{N_{\mathrm{t}}\times 1},
\end{align}
where $\mathbf{W}=[\mathbf{w}_{\mathrm{c}},\mathbf{w}_{\mathrm{s}}]$ denotes the precoding matrix and $\mathbf{s}(t)=[s_{\mathrm{c}}(t)\,s_{\mathrm{s}}(t)]^\top$ contains the communication and sensing probing waveforms $s_\mathrm{c}(t)$ and $s_\mathrm{s}(t)$, respectively. The signal vector $\mathbf{s}(t)$ is assumed to satisfy
\begin{align}
\mathbb{E}\!\left[\mathbf{s}(t)\mathbf{s}^H (t)\right]=\mathbf{I}_{2},
\end{align}
and the transmit power is constrained as $\mathrm{tr}\!\left(\mathbf{W}\mathbf{W}^{H}\right)\le P_{\mathrm{t}}$, where $P_{\mathrm{t}}$ denotes the total BS transmit power \cite{liu2018mu}. This model includes purely sensing transmission, purely communication transmission, and joint ISAC transmission as special cases.

\vspace{-2mm}
\subsection{Target Scattering and End-to-End Echo Model}
The target is modeled as a point scatterer with complex reflectivity $\alpha_{0}\in\mathbb{C}$, written as \cite{hua2022mimo}
\begin{align}
\alpha_{0}=\sqrt{\sigma_{\mathrm{RCS}}}\,e^{-j\varphi_{0}},
\end{align}
where $\sigma_{\mathrm{RCS}}$ denotes the effective radar cross section of the target and $\varphi_{0}$ is an unknown phase term. This point-target model isolates the propagation effects of interest without introducing additional complexity due to extended target structure. Under this model, the received echo at the BS is defined as
\begin{align}
\mathbf{y}_{\mathrm{echo}}(t)=\alpha_{0}\,\mathbf{a}_{\mathrm{b}}(\mathbf{p}_{0})\mathbf{a}_{\mathrm{f}}^{H}(\mathbf{p}_{0})\mathbf{x}(t-\tau_{0})e^{j2\pi\nu_{0}t},\label{eq:echo_signal_compact}
\end{align}
where $\nu_{0}$ denotes the Doppler shift \cite{liu2022integrated}. Equation \eqref{eq:echo_signal_compact} shows that the sensing channel is bilinear in the forward and return steering structures and is therefore fundamentally different from a conventional single manifold array sensing model.

\vspace{-2mm}
\subsection{Received Signal Model}
The practical BS receive signal contains thermal noise and, depending on the duplexing architecture, may also contain residual direct BS transmit-receive leakage. Therefore, we model the received signal as
\begin{align}
\mathbf{y}(t)=\mathbf{y}_{\mathrm{echo}}(t)+\mathbf{y}_{\mathrm{d}}(t)+\mathbf{n}(t),\label{eq:received_total}
\end{align}
where $\mathbf{n}(t)\sim\mathcal{CN}(\mathbf{0},\sigma_n^2\mathbf{I})$
denotes additive circularly symmetric complex Gaussian noise, and $\mathbf{y}_{\mathrm{d}}(t)$ collects residual direct coupling, hardware leakage, and non-ideal self-interference \cite{xiao2021full}.  To focus on the conditional sensing limits imposed by the E-FAS, the subsequent analysis assumes that the target echo is dominant, while $\mathbf{y}_{\mathrm{d}}(t)$ is either sufficiently suppressed or incorporated into the effective disturbance term.

\vspace{-2mm}
\subsection{Communication Channel}
To retain the communication functionality of the ISAC system, we define the E-FAS-assisted channel from the BS to a user located at $\mathbf{p}_{\mathrm{u}}$. The effective channel is given by
\begin{align}
\mathbf{h}_{\mathrm{u}}^{H}=\sum_{m=1}^{M}\rho_{m}\,\ell_{m,\mathrm{u}}\,\mathbf{h}_{\mathrm{B},m}^{H},\label{eq:comm_channel}
\end{align}
where
\begin{align}
\ell_{m,\mathrm{u}}\triangleq\frac{\lambda}{4\pi d_{m,\mathrm{u}}}\exp\!\left(-j\frac{2\pi}{\lambda}d_{m,\mathrm{u}}\right)
\end{align}
denotes the free-space propagation coefficient from the $m$-th radiation point to the user of interest.

The received communication signal is modeled as
\begin{align}
r_{\mathrm{u}}(t)=\mathbf{h}_{\mathrm{u}}^{H}\mathbf{x}(t)+n_{\mathrm{u}}(t),\label{eq:comm_rx_signal}
\end{align}
where $n_{\mathrm{u}}(t) \sim \mathcal{CN}(0,\sigma_{\mathrm{u}}^{2})$.

\vspace{-2mm}
\section{E-FAS-Enabled Sensing}\label{sec:channel}
This section reformulates the received echo in a parametric form suitable for estimation-theoretic analysis. The emphasis is on the spatial sensing structure by E-FAS routing. In particular, the analysis focuses on angular estimation conditioned on a given routing configuration and channel realization.

\vspace{-2mm}
\subsection{Conditional Parametric Observation Model}
The sensing analysis is carried out after delay synchronization and Doppler compensation, so that the focus remains on the angular identifiability by E-FAS. Under this model, the receive vector after matched filtering is expressed as
\begin{align}
\mathbf{y}=\alpha_0\mathbf{a}_{\mathrm{b}}(\theta_0)\mathbf{a}_{\mathrm{f}}^{H}(\theta_0)\mathbf{w}_{\mathrm{s}}+\mathbf{n},\label{eq:vector_signal_revised}
\end{align}
where $\mathbf{w}_{\mathrm{s}}$ is the sensing precoder and $\mathbf{n}\sim\mathcal{CN}(\mathbf{0},\sigma_n^2\mathbf{I})$ is the effective post-processing noise. The observation model assumes matched filtering with respect to the known sensing probing waveform, such that the communication component is treated as suppressed or separable during post-processing \cite{liu2020joint}.

By defining the conditional mean signal as
\begin{align}
\boldsymbol{\mu}(\theta_0)\triangleq\alpha_0\mathbf{a}_{\mathrm{b}}(\theta_0)\mathbf{a}_{\mathrm{f}}^{H}(\theta_0)\mathbf{w}_{\mathrm{s}},\label{eq:mean_signal_revised}
\end{align}
the observation model reduces to
\begin{equation}
\mathbf{y}=\boldsymbol{\mu}(\theta_0)+\mathbf{n}.\label{eq:obs_model_revised}
\end{equation}

The representation in \eqref{eq:obs_model_revised} is conditional in the sense that the routing coefficients and channel vectors are treated as fixed over the sensing interval, consistent with the quasi-static E-FAS routing model introduced in Section~\ref{sec:sys}.

\vspace{-2mm}
\subsection{Expanded Structure of the Mean Signal}
Using the definitions of the forward and backward steering vectors, \eqref{eq:mean_signal_revised} can be expanded as
\begin{align}
\boldsymbol{\mu}(\theta_0)=\alpha_0\sum_{m=1}^{M}\sum_{n=1}^{M}\rho_n \tilde{\rho}_m\ell_{n,0}(\theta_0)\bar{\ell}_{0,m}(\theta_0)\big(\mathbf{h}_{\mathrm{B},n}^{H}\mathbf{w}_{\mathrm{s}}\big)\mathbf{g}_{m,\mathrm{R}},\label{eq:mu_expanded_revised}
\end{align}
which illustrates that the effective sensing response is formed by the superposition of all forward-return routing pairs. This bilinear structure is a defining feature of the E-FAS sensing model and distinguishes it from conventional array sensing models, in which the sensing response is described by a single transmit steering vector and a single receive steering vector.

\vspace{-2mm}
\subsection{Ensemble Interpretation of Routing-Induced Correlation}
Although the subsequent estimation analysis is conditioned on a given realization of the routing coefficients and channel vectors, the second-order statistics introduced in Section~\ref{sec:sys} remain useful for structural interpretation. In particular, the covariance of the mean signal across routing and channel realizations can be defined as
\begin{align}
\mathbf{R}_{\mu}\triangleq\mathbb{E}\!\left[\boldsymbol{\mu}(\theta_0)\boldsymbol{\mu}^{H}(\theta_0)\right].\label{eq:R_mu_revised}
\end{align}
Substituting \eqref{eq:mu_expanded_revised} into \eqref{eq:R_mu_revised}, the covariance involves mixed terms containing the products $\rho_n \rho_{n'}^{*}$, $\mathbf{h}_{\mathrm{B},n}^{H}\mathbf{w}_{\mathrm{s}}\mathbf{w}_{\mathrm{s}}^{H}\mathbf{h}_{\mathrm{B},n'}$, and $\mathbf{g}_{m,\mathrm{R}}\mathbf{g}_{m',\mathrm{R}}^{H}$, which depend on the routing correlation coefficients $\xi_{m,n}$ and the channel correlation factors $\beta_{m,n}^{(\mathrm{t})}$ and $\beta_{m,n}^{(\mathrm{r})}$. Therefore, the second-order structure of the E-FAS sensing response is jointly shaped by routing-induced coupling and BS-side spatial correlation. This statistical interpretation is used only to explain structural trends and is not substituted directly into the conditional Fisher information analysis. It is worth noting that the ensemble correlation model introduced above is used only to characterize the second-order structure of routing-induced coupling. The Fisher information and CRB analysis developed later remain fully conditional on a fixed realization of the routing state and channel vectors.

\vspace{-2mm}
\section{Fisher Information and CRB Analysis}\label{sec:crb}
This section characterizes the fundamental limit of angular estimation for the conditional E-FAS sensing model. Throughout the analysis, the routing state and channel realizations are treated as deterministic, consistent with the quasi-static assumption introduced in Section~\ref{sec:sys}.

\vspace{-2mm}
\subsection{Likelihood Function}
From \eqref{eq:obs_model_revised}, the observation vector $\mathbf{y}$ follows a circularly symmetric complex Gaussian distribution with mean $\boldsymbol{\mu}(\theta_0)$ and covariance $\sigma_n^2\mathbf{I}$ \cite{trees2002optimum,stoica1989music}. Therefore, conditioned on $\theta_0$, the likelihood function is given by
\begin{align}
p(\mathbf{y};\theta_0)=\frac{1}{(\pi\sigma_n^2)^{N_{\mathrm r}}}\exp\!\left(-\frac{1}{\sigma_n^2}\left\|\mathbf{y}-\boldsymbol{\mu}(\theta_0)\right\|^2\right),
\end{align}
and the corresponding log-likelihood is written as
\begin{align}
\mathcal{L}(\theta_0)=-\frac{1}{\sigma_n^2}\left\|\mathbf{y}-\boldsymbol{\mu}(\theta_0)\right\|^2+\mathrm{const}.\label{eq:loglikelihood_final}
\end{align}

\vspace{-2mm}
\subsection{Fisher Information}
Since the covariance matrix is independent of $\theta_0$, the Fisher information is fully determined by the sensitivity of the conditional mean signal. Using the standard Fisher information expression for complex Gaussian observations \cite{stoica1989music}, we obtain
\begin{align}
J_{\theta\theta}=\frac{2}{\sigma_n^2}\Re\!\left\{\left(\frac{\partial \boldsymbol{\mu}(\theta_0)}{\partial \theta_0}\right)^{H}\left(\frac{\partial \boldsymbol{\mu}(\theta_0)}{\partial \theta_0}\right)\right\}.\label{eq:FIM_final}
\end{align}
Equivalently,
\begin{align}
J_{\theta\theta}=\frac{2}{\sigma_n^2}\left\|\frac{\partial \boldsymbol{\mu}(\theta_0)}{\partial \theta_0}\right\|^2.
\end{align}
From \eqref{eq:mean_signal_revised}, the derivative of the mean signal is given by
\begin{align}
\frac{\partial \boldsymbol{\mu}(\theta_0)}{\partial \theta_0}=\alpha_0\left(\frac{\partial \mathbf{a}_{\mathrm b}(\theta_0)}{\partial \theta_0} \mathbf{a}_{\mathrm f}^{H}(\theta_0) +\mathbf{a}_{\mathrm b}(\theta_0)\frac{\partial \mathbf{a}_{\mathrm f}^{H}(\theta_0)}{\partial \theta_0}\right)\mathbf{w}_{\mathrm s}.\label{eq:dmu_final}
\end{align}
The derivatives of the steering vectors are given by
\begin{align}
\frac{\partial \mathbf{a}_{\mathrm f}^{H}(\theta_0)}{\partial \theta_0}&=\sum_{n=1}^{M}\rho_n\frac{\partial \ell_{n,0}(\theta_0)}{\partial \theta_0}\mathbf{h}_{\mathrm B,n}^{H},\\
\frac{\partial \mathbf{a}_{\mathrm b}(\theta_0)}{\partial \theta_0}&=\sum_{m=1}^{M}\tilde{\rho}_m\frac{\partial \bar{\ell}_{0,m}(\theta_0)}{\partial \theta_0}\mathbf{g}_{m,\mathrm R}.
\end{align}
Substituting these expressions into \eqref{eq:FIM_final} shows that the Fisher information is governed by the coherent interaction of all forward-return routing pairs. In particular, both the routing coefficients and the geometric sensitivity terms $\partial \ell/\partial\theta_0$ jointly determine the achievable angular estimation accuracy.

\vspace{-2mm}
\subsection{CRB}
The CRB provides the fundamental lower bound on the variance of any unbiased estimator of the angular parameter under the conditional observation model in \eqref{eq:obs_model_revised}. Since the unknown parameter is scalar, the angular CRB is obtained directly from
the inverse of the Fisher information as
\begin{align}
\mathrm{CRB}_{\theta}\triangleq\left[J_{\theta\theta}\right]^{-1}=\frac{1}{J_{\theta\theta}},\label{eq:CRB_final}
\end{align}
provided that $J_{\theta\theta}>0$. Thus, for any unbiased estimator $\hat{\theta}_0$,
\begin{align}
\mathrm{Var}(\hat{\theta}_0)\ge\mathrm{CRB}_{\theta}.
\end{align}
Using \eqref{eq:FIM_final}, the CRB can be equivalently written as
\begin{align}
\mathrm{CRB}_{\theta}=\frac{\sigma_n^2}{2\left\|\frac{\partial \boldsymbol{\mu}(\theta_0)}{\partial \theta_0}\right\|^2}.\label{eq:CRB_derivative_form}
\end{align}
This expression shows that angular estimation accuracy is governed by the sensitivity of the conditional mean response to perturbations in $\theta_0$. In the considered E-FAS sensing model, this sensitivity is not determined solely by the received signal power. Rather, it depends jointly on the effective routing coefficients, the forward and return channel realizations, and the angular variation of the final free-space propagation coefficients associated with the distributed radiation points. Therefore, a small CRB requires not only large effective routing gain, but also sufficient angular diversity in the composite E-FAS sensing response. If the routed paths provide strong received power but induce nearly identical angular variations, the norm $\left\|\partial \boldsymbol{\mu}(\theta_0)/\partial\theta_0\right\|$ may remain limited, resulting in poor angular resolution. Conversely, spatially diverse radiation points can increase this derivative norm by creating distinguishable angle-dependent phase and amplitude variations across the forward and return paths. This observation motivates the scaling and structural analysis developed in the next section.

\vspace{-2mm}
\subsection{A Rank-Deficient Sensing Regime}
The following proposition characterizes a sufficient condition under which the E-FAS sensing channel collapses to a rank-one spatial structure, thereby limiting the number of independent sensing modes.

\begin{proposition}\label{prop1}
Suppose that the routing coefficients satisfy
\begin{align}
\rho_m=c_{\rho}\sqrt{\gamma_m},~\tilde{\rho}_m=c_{\tilde{\rho}}\sqrt{\tilde{\gamma}_m},~\forall m,\label{eq:routing_alignment}
\end{align}
for some complex scalars $c_{\rho}$ and $c_{\tilde{\rho}}$. Furthermore, assume that there exist vectors $\mathbf{h}_{\mathrm B}\in\mathbb{C}^{N_{\mathrm t}\times 1}$ and $\mathbf{g}_{\mathrm R}\in\mathbb{C}^{N_{\mathrm r}\times 1}$ such that
\begin{align}
\mathbf{h}_{\mathrm B,m}=a_m\mathbf{h}_{\mathrm B},~\mathbf{g}_{m,\mathrm R}=b_m\mathbf{g}_{\mathrm R},~\forall m,\label{eq:channel_alignment}
\end{align}
where $\{a_m\}$ and $\{b_m\}$ are complex scalars. Then the conditional mean signal $\boldsymbol{\mu}(\theta_0)$ admits the factorization
\begin{align}
\boldsymbol{\mu}(\theta_0)=\psi(\theta_0)\,\mathbf{g}_{\mathrm R},
\end{align}
for some scalar function $\psi(\theta_0)$. Consequently,
\begin{align}
\mathrm{rank}\!\left(\boldsymbol{\mu}(\theta_0)\boldsymbol{\mu}^{H}(\theta_0)\right)=1,
\end{align}
and the derivative $\partial\boldsymbol{\mu}(\theta_0)/\partial\theta_0$ remains confined to the same one-dimensional subspace. Hence, the Fisher information is generated by a single effective spatial sensing mode.
\end{proposition}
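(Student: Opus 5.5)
The plan is to substitute the alignment conditions \eqref{eq:routing_alignment} and \eqref{eq:channel_alignment} directly into the expanded mean signal \eqref{eq:mu_expanded_revised}. The double sum then separates into a product of two scalar sums, multiplied by the common vectors $\mathbf{h}_{\mathrm B}$ and $\mathbf{g}_{\mathrm R}$.

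Concretely, under \eqref{eq:channel_alignment} we have $\mathbf{h}_{\mathrm B,n}^{H}\mathbf{w}_{\mathrm s}=a_n^{*}\,\mathbf{h}_{\mathrm B}^{H}\mathbf{w}_{\mathrm s}$ and $\mathbf{g}_{m,\mathrm R}=b_m\mathbf{g}_{\mathrm R}$. Then \eqref{eq:mu_expanded_revised} becomes
\begin{align}
\boldsymbol{\mu}(\theta_0)=\alpha_0\big(\mathbf{h}_{\mathrm B}^{H}\mathbf{w}_{\mathrm s}\big)\Big(\sum_{n=1}^{M}\rho_n a_n^{*}\ell_{n,0}(\theta_0)\Big)\Big(\sum_{m=1}^{M}\tilde{\rho}_m b_m\bar{\ell}_{0,m}(\theta_0)\Big)\mathbf{g}_{\mathrm R}.
\end{align}
Inserting $\rho_n=c_\rho\sqrt{\gamma_n}$ and $\tilde{\rho}_m=c_{\tilde\rho}\sqrt{\tilde\gamma_m}$ pulls the constants $c_\rho c_{\tilde\rho}$ out of the sums. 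This defines
\begin{align}
\psi(\theta_0)\triangleq\alpha_0 c_\rho c_{\tilde\rho}\big(\mathbf{h}_{\mathrm B}^{H}\mathbf{w}_{\mathrm s}\big)\Big(\sum_{n=1}^{M}\sqrt{\gamma_n}\,a_n^{*}\ell_{n,0}(\theta_0)\Big)\Big(\sum_{m=1}^{M}\sqrt{\tilde\gamma_m}\,b_m\bar{\ell}_{0,m}(\theta_0)\Big),
\end{align}
which is a scalar function of $\theta_0$. Equivalently, at the steering-vector level, $\mathbf{a}_{\mathrm b}(\theta_0)$ is a scalar multiple of $\mathbf{g}_{\mathrm R}$ and $\mathbf{a}_{\mathrm f}^{H}(\theta_0)$ is a scalar multiple of $\mathbf{h}_{\mathrm B}^{H}$. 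Strictly, the routing alignment is not needed for the factorization; the channel alignment alone suffices, and the routing alignment only makes the form of $\psi$ explicit. I would remark on this rather than rely on it.

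Next come the rank and derivative claims. Since $\boldsymbol{\mu}\boldsymbol{\mu}^{H}=|\psi(\theta_0)|^2\mathbf{g}_{\mathrm R}\mathbf{g}_{\mathrm R}^{H}$, its rank is one. This holds provided $\psi(\theta_0)\neq 0$ and $\mathbf{g}_{\mathrm R}\neq\mathbf{0}$, a nondegeneracy condition I would state explicitly; otherwise the rank is zero. For the derivative, the key observation is that $\mathbf{g}_{\mathrm R}$ does not depend on $\theta_0$: $\mathbf{g}_{m,\mathrm R}$ is the return channel from the E-FAS to the BS, and all angle dependence sits in $\ell_{n,0}$ and $\bar\ell_{0,m}$ through $\mathbf{p}_0$. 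Hence $\partial\boldsymbol{\mu}/\partial\theta_0=\psi'(\theta_0)\mathbf{g}_{\mathrm R}$, where $\psi'$ follows from the product rule applied to the two scalar sums. This is consistent with \eqref{eq:dmu_final} and the steering-vector derivatives given above.

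Finally, substituting into \eqref{eq:FIM_final} gives $J_{\theta\theta}=\frac{2}{\sigma_n^2}|\psi'(\theta_0)|^2\|\mathbf{g}_{\mathrm R}\|^2$. The Fisher information is therefore carried entirely by the single spatial mode $\mathbf{g}_{\mathrm R}$, and the receive array contributes only the scalar gain $\|\mathbf{g}_{\mathrm R}\|^2$. The main obstacle I expect is not algebraic. It is the implicit modeling assumption that $\mathbf{h}_{\mathrm B}$, $\mathbf{g}_{\mathrm R}$, $a_m$, $b_m$ and the routing coefficients are independent of $\theta_0$. I would make this explicit by invoking the conditional, quasi-static model of Section~\ref{sec:sys}, where $\rho_m$ and the BS–surface channels are fixed realizations.
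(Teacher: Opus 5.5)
Your proposal is correct and follows essentially the same route as the paper, which substitutes the alignment conditions into $\mathbf{a}_{\mathrm f}^{H}(\theta_0)$ and $\mathbf{a}_{\mathrm b}(\theta_0)$ to get $\chi_{\mathrm f}(\theta_0)\mathbf{h}_{\mathrm B}^{H}$ and $\chi_{\mathrm b}(\theta_0)\mathbf{g}_{\mathrm R}$, arriving at the same $\psi(\theta_0)$ you obtain from the expanded double sum. Your extra remarks are all valid and make explicit what the paper leaves implicit: the routing alignment is superfluous for the factorization, rank exactly one requires $\psi(\theta_0)\neq 0$ (in particular $\mathbf{h}_{\mathrm B}^{H}\mathbf{w}_{\mathrm s}\neq 0$) and $\mathbf{g}_{\mathrm R}\neq\mathbf{0}$, and the derivative claim relies on $\mathbf{g}_{\mathrm R}$ being independent of $\theta_0$.
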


\begin{proof}	
Substituting \eqref{eq:routing_alignment} and \eqref{eq:channel_alignment} into the transmit steering vector yields
\begin{align}
\mathbf{a}_{\mathrm f}^{H}(\theta_0)&=\sum_{m=1}^{M}c_{\rho}\sqrt{\gamma_m}\,\ell_{m,0}(\theta_0)a_m^{*}\mathbf{h}_{\mathrm B}^{H}=\chi_{\mathrm f}(\theta_0)\mathbf{h}_{\mathrm B}^{H},
\end{align}
where
\begin{align}
\chi_{\mathrm f}(\theta_0)\triangleq\sum_{m=1}^{M}c_{\rho}\sqrt{\gamma_m}\,\ell_{m,0}(\theta_0)a_m^{*} 
\end{align}
is a scalar-valued function.

Similarly, the receive steering vector becomes
\begin{align}
\mathbf{a}_{\mathrm b}(\theta_0)&=\sum_{m=1}^{M}c_{\tilde{\rho}}\sqrt{\tilde{\gamma}_m}\,\bar{\ell}_{0,m}(\theta_0)b_m\mathbf{g}_{\mathrm R}=\chi_{\mathrm b}(\theta_0)\mathbf{g}_{\mathrm R},
\end{align}
where $\chi_{\mathrm b}(\theta_0)$ is also scalar-valued.

Substituting the above expressions into the conditional mean signal model yields
\begin{align}
\boldsymbol{\mu}(\theta_0)&=\alpha_0\chi_{\mathrm b}(\theta_0)\chi_{\mathrm f}(\theta_0)\mathbf{g}_{\mathrm R}\mathbf{h}_{\mathrm B}^{H}\mathbf{w}_{\mathrm s}=\psi(\theta_0)\,\mathbf{g}_{\mathrm R},
\end{align}
where
\begin{equation}
\psi(\theta_0)=\alpha_0\chi_{\mathrm b}(\theta_0)\chi_{\mathrm f}(\theta_0)\mathbf{h}_{\mathrm B}^{H}\mathbf{w}_{\mathrm s}
\end{equation}
is scalar-valued. Therefore, both $\boldsymbol{\mu}(\theta_0)$ and its derivative with respect to $\theta_0$ belong to the one-dimensional subspace spanned by $\mathbf{g}_{\mathrm R}$. Consequently, the corresponding signal covariance matrix is rank one, and the Fisher information is generated through a single effective sensing mode.
\end{proof}

Proposition~\ref{prop1} shows that strong routing alignment combined with aligned channel responses can eliminate spatial diversity in the effective E-FAS sensing channel. In this regime, increasing the number of radiation points does not increase the number of independent sensing dimensions, since all routed contributions collapse onto the same spatial subspace.

\vspace{-2mm}
\section{Insights and Fundamental Limits}\label{sec:limits}
In this section, we interpret the Fisher information derived in the previous section in order to characterize the roles of SW routing gain, routing-induced correlation, and spatial geometry in the proposed E-FAS sensing architecture. Unless otherwise specified, all quantities are conditioned on a fixed realization of the routing state and the associated channel vectors.

\vspace{-2mm}
\subsection{Preliminaries}
From \eqref{eq:mean_signal_revised}, the conditional mean signal is given by
\begin{align}
\boldsymbol{\mu}(\theta_0)=\alpha_0\mathbf{a}_{\mathrm{b}}(\theta_0)\mathbf{a}_{\mathrm{f}}^{H}(\theta_0)\mathbf{w}_{\mathrm{s}}.
\end{align}
To expose the contributions of the forward and return propagation structures, we define the effective forward gain as
\begin{align}
G_{\mathrm f}\triangleq\left|\mathbf{a}_{\mathrm f}^{H}(\theta_0)\mathbf{w}_{\mathrm s}\right|^2,
\end{align}
and the effective return gain as
\begin{align}
G_{\mathrm b}\triangleq\left\|\mathbf{a}_{\mathrm b}(\theta_0)\right\|^2.
\end{align}
Under the conditional observation model, the effective sensing signal-to-noise ratio (SNR) can be expressed as
\begin{equation}
\mathrm{SNR}_{\mathrm s}=\frac{|\alpha_0|^2G_{\mathrm f}G_{\mathrm b}}{\sigma_n^2}.\label{eq:SNR_final}
\end{equation}

Equation \eqref{eq:SNR_final} characterizes the overall amplification of the sensing response induced by the E-FAS routing structure. But the Fisher information depends not only on the received signal strength, but also on the angular sensitivity of the composite sensing manifold. Therefore, a larger sensing SNR alone does not necessarily imply higher angular estimation accuracy. In particular, if different routing paths generate nearly identical angular responses, the derivative $\partial\boldsymbol{\mu}(\theta_0)/\partial\theta_0$ may remain small even when the received signal power is large. The following analysis formalizes how routing diversity and spatial structure jointly determine the achievable sensing resolution.

\vspace{-2mm}
\subsection{First-Order Angular Identifiability}
Before deriving scaling laws, it is useful to state the regularity condition under which the angular parameter contributes nonzero Fisher information. As the observation covariance is independent of $\theta_0$, the angular information is determined by the sensitivity of the conditional mean vector $\boldsymbol{\mu}(\theta_0)$.

\begin{lemma}[First-Order Angular Identifiability]\label{lemma1}
Under the conditional Gaussian observation model in \eqref{eq:obs_model_revised}, the angular Fisher information is strictly positive if and only if
\begin{align}
\frac{\partial \boldsymbol{\mu}(\theta_0)}{\partial \theta_0}\neq \mathbf{0}.\label{eq:first_order_identifiability}
\end{align}
Equivalently, if \eqref{eq:first_order_identifiability} holds, then $\theta_0$ is locally identifiable in the first-order Fisher-information sense;	otherwise, the angular CRB is unbounded.
\end{lemma}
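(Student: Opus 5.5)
The plan is to work directly from the Fisher information expression \eqref{eq:FIM_final} for the conditional Gaussian model \eqref{eq:obs_model_revised}. The covariance $\sigma_n^2\mathbf{I}$ does not depend on $\theta_0$, so only the mean term contributes. The first step is to note that, for any complex vector $\mathbf{v}$, the quantity $\Re\{\mathbf{v}^H\mathbf{v}\}$ equals $\mathbf{v}^H\mathbf{v}=\|\mathbf{v}\|^2$, which is real and nonnegative. Applying this with $\mathbf{v}=\partial\boldsymbol{\mu}(\theta_0)/\partial\theta_0$ gives
\begin{align}
J_{\theta\theta}=\frac{2}{\sigma_n^2}\left\|\frac{\partial \boldsymbol{\mu}(\theta_0)}{\partial \theta_0}\right\|^2 .
\end{align}
This step needs $\boldsymbol{\mu}$ to be differentiable in $\theta_0$. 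That holds because $d_{m,0}$ depends smoothly on $\theta_0$ through $\mathbf{p}_0=\mathbf{p}_{\mathrm B}+r_0\mathbf{u}(\theta_0)$ whenever $d_{m,0}>0$, so each $\ell_{m,0}$ and $\bar{\ell}_{0,m}$ is differentiable.

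The equivalence then follows from the definiteness of the Euclidean norm. Since $\sigma_n^2\in(0,\infty)$, the Fisher information satisfies $J_{\theta\theta}>0$ exactly when $\|\partial\boldsymbol{\mu}/\partial\theta_0\|>0$, which is exactly condition \eqref{eq:first_order_identifiability}. Both directions come out together. If the derivative is nonzero, the information is strictly positive, and by \eqref{eq:CRB_final} and \eqref{eq:CRB_derivative_form} the CRB is finite and equals $\sigma_n^2/(2\|\partial\boldsymbol{\mu}/\partial\theta_0\|^2)$. If the derivative is zero, then $J_{\theta\theta}=0$.

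For the ``otherwise'' clause, I will interpret $[J_{\theta\theta}]^{-1}$ at $J_{\theta\theta}=0$ as $+\infty$. One way is through the limit of \eqref{eq:CRB_derivative_form} as the derivative norm tends to zero. Another is the standard convention that a zero Fisher information admits no finite-variance unbiased estimator that is locally unbiased at $\theta_0$. I will add a remark that first-order identifiability means the score $\partial\mathcal{L}/\partial\theta_0=\frac{2}{\sigma_n^2}\Re\{(\partial\boldsymbol{\mu}/\partial\theta_0)^H(\mathbf{y}-\boldsymbol{\mu})\}$ is not identically zero. This uses the log-likelihood \eqref{eq:loglikelihood_final}, and it ties the Fisher-information statement to local sensitivity of the likelihood.

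The main obstacle is not computational; it is making the unbounded-CRB direction rigorous. I plan to handle it with the convention $1/0=\infty$ and, if needed, a short argument. Suppose an estimator is unbiased in a neighborhood of $\theta_0$. Differentiating its bias condition shows that the covariance between the estimator and the score equals $1$. That is impossible when the score is identically zero, so no such estimator with finite variance exists. As an optional illustration, the derivative can be expanded via \eqref{eq:dmu_final}, which shows concretely when it vanishes, for example when all $\partial\ell/\partial\theta_0$ terms cancel.
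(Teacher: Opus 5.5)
Your proposal is correct and follows essentially the same route as the paper's proof. Both reduce \eqref{eq:FIM_final} to $J_{\theta\theta}=\frac{2}{\sigma_n^2}\|\partial\boldsymbol{\mu}(\theta_0)/\partial\theta_0\|^2$ and use $\sigma_n^2>0$ with definiteness of the norm; your score-covariance argument, where unbiasedness near $\theta_0$ forces $\mathbb{E}[\hat{\theta}_0\, s(\mathbf{y};\theta_0)]=1$ and this is impossible when the score vanishes identically, makes the ``unbounded CRB'' clause more rigorous than the paper's bare reading of $1/J_{\theta\theta}$ at $J_{\theta\theta}=0$.
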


\begin{proof}
From \eqref{eq:FIM_final}, the angular Fisher information is given by
\begin{align}
J_{\theta\theta}=\frac{2}{\sigma_n^2}\left\|\frac{\partial \boldsymbol{\mu}(\theta_0)}{\partial \theta_0}\right\|^2.
\end{align}
Since $\sigma_n^2>0$, we have $J_{\theta\theta}>0$ if and only if $\partial \boldsymbol{\mu}(\theta_0)/\partial\theta_0\neq\mathbf{0}$. If this derivative is zero, then $J_{\theta\theta}=0$, and the scalar CRB, $\mathrm{CRB}_{\theta}=1/J_{\theta\theta}$, becomes unbounded.	Conversely, if the derivative is nonzero, then $J_{\theta\theta}>0$, and the observation distribution changes to 	first order with respect to perturbations of $\theta_0$. Thus, the proof is accomplished.
\end{proof}

It is worth recognizing that Lemma \ref{lemma1} shows that angular resolvability in the proposed E-FAS sensing model requires the composite mean response to vary with the target angle. As such, received signal power alone is not sufficient; the routed forward and return paths must also induce a nonzero angular perturbation in the effective sensing manifold.

\vspace{-2mm}
\subsection{Scaling with Routing Gain}
We next quantify how the effective routing gain enters the angular Fisher information. To do so, we define the normalized angular sensitivities as
\begin{align}
\delta_{\mathrm f}(\theta_0)&\triangleq\frac{\left|\frac{\partial \mathbf{a}_{\mathrm f}^{H}(\theta_0)}{\partial \theta_0}\mathbf{w}_{\mathrm s}\right|}{\left|\mathbf{a}_{\mathrm f}^{H}(\theta_0)\mathbf{w}_{\mathrm s}\right|},\\
\delta_{\mathrm b}(\theta_0)&\triangleq\frac{\left\|\frac{\partial \mathbf{a}_{\mathrm b}(\theta_0)}{\partial \theta_0}\right\|}{\left\|\mathbf{a}_{\mathrm b}(\theta_0)\right\|},
\end{align}
whenever the denominators are nonzero.

\begin{proposition}[Routing-Gain Fisher Information Bound]
Assume that $G_{\mathrm f}>0$ and $G_{\mathrm b}>0$. Then the angular Fisher information satisfies
\begin{align}
J_{\theta\theta}\le\frac{2|\alpha_0|^2}{\sigma_n^2}G_{\mathrm f}G_{\mathrm b}\left(\delta_{\mathrm f}(\theta_0)+\delta_{\mathrm b}(\theta_0)\right)^2.\label{eq:J_gain_upper_bound}
\end{align}
Also, assuming that there exists a constant $\zeta > 0$ such that
\begin{align}
\left\|\frac{\partial \boldsymbol{\mu}(\theta_0)}{\partial\theta_0}\right\|\ge|\alpha_0|\sqrt{G_{\mathrm f}G_{\mathrm b}}\,\zeta,\label{eq:no_cancel_condition}
\end{align}
then we have
\begin{align}
J_{\theta\theta}\ge\frac{2|\alpha_0|^2}{\sigma_n^2}G_{\mathrm f}G_{\mathrm b}\zeta^2.\label{eq:J_gain_lower_bound}
\end{align}
\end{proposition}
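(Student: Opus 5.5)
The upper bound follows from the product rule in \eqref{eq:dmu_final} and the triangle inequality; the lower bound is immediate from the equivalent form $J_{\theta\theta}=\frac{2}{\sigma_n^2}\|\partial\boldsymbol{\mu}/\partial\theta_0\|^2$ of \eqref{eq:FIM_final}.

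\textbf{Upper bound.} Start from \eqref{eq:dmu_final}. The key observation is that each term factors as a vector times a scalar. Since $\mathbf{a}_{\mathrm f}^{H}(\theta_0)\mathbf{w}_{\mathrm s}$ and $(\partial\mathbf{a}_{\mathrm f}^{H}/\partial\theta_0)\mathbf{w}_{\mathrm s}$ are scalars, we can write
\begin{align}
\frac{\partial\boldsymbol{\mu}(\theta_0)}{\partial\theta_0}=\alpha_0\left(\big(\mathbf{a}_{\mathrm f}^{H}\mathbf{w}_{\mathrm s}\big)\frac{\partial\mathbf{a}_{\mathrm b}}{\partial\theta_0}+\Big(\frac{\partial\mathbf{a}_{\mathrm f}^{H}}{\partial\theta_0}\mathbf{w}_{\mathrm s}\Big)\mathbf{a}_{\mathrm b}\right).
\end{align}
Applying the triangle inequality and homogeneity of the norm gives
\begin{align}
\left\|\frac{\partial\boldsymbol{\mu}}{\partial\theta_0}\right\|\le|\alpha_0|\left(\big|\mathbf{a}_{\mathrm f}^{H}\mathbf{w}_{\mathrm s}\big|\left\|\frac{\partial\mathbf{a}_{\mathrm b}}{\partial\theta_0}\right\|+\left|\frac{\partial\mathbf{a}_{\mathrm f}^{H}}{\partial\theta_0}\mathbf{w}_{\mathrm s}\right|\|\mathbf{a}_{\mathrm b}\|\right).
\end{align}
Because $G_{\mathrm f}>0$ and $G_{\mathrm b}>0$, the sensitivities $\delta_{\mathrm f}$ and $\delta_{\mathrm b}$ are well defined. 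Substituting $\|\partial\mathbf{a}_{\mathrm b}/\partial\theta_0\|=\delta_{\mathrm b}\sqrt{G_{\mathrm b}}$ and $|(\partial\mathbf{a}_{\mathrm f}^{H}/\partial\theta_0)\mathbf{w}_{\mathrm s}|=\delta_{\mathrm f}\sqrt{G_{\mathrm f}}$, together with $|\mathbf{a}_{\mathrm f}^{H}\mathbf{w}_{\mathrm s}|=\sqrt{G_{\mathrm f}}$ and $\|\mathbf{a}_{\mathrm b}\|=\sqrt{G_{\mathrm b}}$, yields
\begin{align}
\left\|\frac{\partial\boldsymbol{\mu}}{\partial\theta_0}\right\|\le|\alpha_0|\sqrt{G_{\mathrm f}G_{\mathrm b}}\,\big(\delta_{\mathrm f}+\delta_{\mathrm b}\big).
\end{align}
Both sides are nonnegative, so squaring preserves the inequality. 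Multiplying by $2/\sigma_n^2$ then gives \eqref{eq:J_gain_upper_bound}.

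\textbf{Lower bound.} Square assumption \eqref{eq:no_cancel_condition}, which is legitimate since both sides are nonnegative. Substituting into $J_{\theta\theta}=\frac{2}{\sigma_n^2}\|\partial\boldsymbol{\mu}/\partial\theta_0\|^2$ gives \eqref{eq:J_gain_lower_bound} directly. Since $\zeta>0$ and $G_{\mathrm f}G_{\mathrm b}>0$, this bound also implies $J_{\theta\theta}>0$, consistent with Lemma~\ref{lemma1}.

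\textbf{Main obstacle.} The only step needing care is the rearrangement in the upper bound, namely recognizing that each term of \eqref{eq:dmu_final} is a vector multiplied by a complex scalar. This is what makes the norm factor exactly into the $\sqrt{G}$ and $\delta$ quantities. Everything after that is routine.
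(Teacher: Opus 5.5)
Your proposal is correct and follows the paper's proof: both apply the triangle inequality to the product-rule expression \eqref{eq:dmu_final}, pull out the scalar factors to obtain the bound $|\alpha_0|\sqrt{G_{\mathrm f}G_{\mathrm b}}\,(\delta_{\mathrm f}+\delta_{\mathrm b})$, square it, and substitute into \eqref{eq:FIM_final}, and both obtain the lower bound directly from \eqref{eq:no_cancel_condition}. You also state explicitly that $G_{\mathrm f},G_{\mathrm b}>0$ is what makes $\delta_{\mathrm f},\delta_{\mathrm b}$ well defined, a point the paper leaves implicit.
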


\begin{proof}
From \eqref{eq:dmu_final} and by using the triangle inequality, we have
\begin{align}
\left\|\frac{\partial \boldsymbol{\mu}(\theta_0)}{\partial\theta_0}\right\|&\le|\alpha_0|\left(\left\|\frac{\partial \mathbf{a}_{\mathrm b}(\theta_0)}{\partial\theta_0}\right\|\left|\mathbf{a}_{\mathrm f}^{H}(\theta_0)\mathbf{w}_{\mathrm s}\right|\right.\nonumber\\
&\quad\left.+\left\|\mathbf{a}_{\mathrm b}(\theta_0)\right\|\left|\frac{\partial \mathbf{a}_{\mathrm f}^{H}(\theta_0)}{\partial\theta_0}\mathbf{w}_{\mathrm s}\right|\right)\nonumber\\
&=|\alpha_0|\sqrt{G_{\mathrm f}G_{\mathrm b}}\left(\delta_{\mathrm f}(\theta_0)+\delta_{\mathrm b}(\theta_0)\right).
\end{align}
Squaring both sides and substituting into \eqref{eq:FIM_final} yields \eqref{eq:J_gain_upper_bound}. The lower bound follows directly from \eqref{eq:no_cancel_condition} and \eqref{eq:FIM_final}.
\end{proof}

Equation~\eqref{eq:J_gain_upper_bound} shows that the product $G_{\mathrm f}G_{\mathrm b}$ controls the power scaling of the angular Fisher information only through the angular sensitivities of the composite forward and return steering structures. Thus, routing gain is necessary for improving the sensing SNR, but it is not sufficient for high angular resolution unless the routed paths also produce distinguishable angle-dependent sensing responses.

\vspace{-2mm}
\subsection{Role of Spatial Geometry}
Here, we characterize how the spatial distribution of the effective radiation points affects angular sensing performance. In the proposed E-FAS architecture, the angular Fisher information depends on the variation of the propagation coefficients with respect to the target angle, which is determined fundamentally by the underlying geometry of the routed paths.

\begin{proposition}[Geometry-Dependent Information Accumulation]\label{prop3}
Assume that the routing-induced correlation coefficients satisfy
\begin{align}
|\beta_{m,n}^{(\mathrm t)}|\le \epsilon,~|\beta_{m,n}^{(\mathrm r)}|\le \epsilon,~m\neq n,
\end{align}
for some sufficiently small $\epsilon>0$. Also, assume $c_1,c_2,c_3,c_4>0$ such that
\begin{align}
c_1\le|\rho_m|,~|\tilde{\rho}_m|\le c_2,
\end{align}
and
\begin{align}
c_3\le \|\mathbf h_{\mathrm B,m}\|, \|\mathbf g_{m,\mathrm R}\|\le c_4,
\end{align}
for all active routing paths. Then there exist positive constants $C_{\mathrm L}$ and $C_{\mathrm U}$ independent of $M$ such that
\begin{align}
C_{\mathrm L}\,\Psi(\theta_0)\le J_{\theta\theta}\le C_{\mathrm U}\,\Psi(\theta_0),\label{eq:geometry_bound_final}
\end{align}
where
\begin{align}
\Psi(\theta_0)\triangleq\sum_{m=1}^{M}\left|\frac{\partial \ell_{m,0}(\theta_0)}{\partial\theta_0}\right|^2+\sum_{m=1}^{M}\left|\frac{\partial \bar{\ell}_{0,m}(\theta_0)}{\partial\theta_0}\right|^2.
\end{align}
\end{proposition}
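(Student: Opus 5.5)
Starting from \eqref{eq:dmu_final}, I will split the derivative of the mean signal into two terms,
\begin{align}
\frac{\partial\boldsymbol{\mu}}{\partial\theta_0}=\alpha_0\big(\underbrace{\partial_\theta\mathbf a_{\mathrm b}\,(\mathbf a_{\mathrm f}^H\mathbf w_{\mathrm s})}_{\mathbf T_{\mathrm b}}+\underbrace{\mathbf a_{\mathrm b}\,(\partial_\theta\mathbf a_{\mathrm f}^H\mathbf w_{\mathrm s})}_{\mathbf T_{\mathrm f}}\big),
\end{align}
and expand each term with the derivative formulas for the steering vectors. This gives $\partial_\theta\mathbf a_{\mathrm b}=\sum_m\tilde\rho_m\,\partial_\theta\bar\ell_{0,m}\,\mathbf g_{m,\mathrm R}$ and $\partial_\theta\mathbf a_{\mathrm f}^H\mathbf w_{\mathrm s}=\sum_n\rho_n\,\partial_\theta\ell_{n,0}\,(\mathbf h_{\mathrm B,n}^H\mathbf w_{\mathrm s})$.

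For the upper bound I will use $\|\mathbf T_{\mathrm b}+\mathbf T_{\mathrm f}\|^2\le 2\|\mathbf T_{\mathrm b}\|^2+2\|\mathbf T_{\mathrm f}\|^2$, as in the routing-gain proposition, and bound each piece by a quasi-orthogonality (Gram) argument. For the return term,
\begin{align}
\Big\|\sum_m x_m\mathbf g_{m,\mathrm R}\Big\|^2=\sum_{m,m'}x_m x_{m'}^*\,\mathbf g_{m',\mathrm R}^H\mathbf g_{m,\mathrm R}.
\end{align}
The assumption $|\beta^{(\mathrm r)}_{m,n}|\le\epsilon$ will be read, in the conditional setting, as a bound $|\mathbf g_{m',\mathrm R}^H\mathbf g_{m,\mathrm R}|\le\epsilon c_4^2$ for $m\ne m'$. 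The diagonal entries then lie in $[c_3^2,c_4^2]$, and Gershgorin's theorem gives the Gram matrix eigenvalues in $[c_3^2-(M-1)\epsilon c_4^2,\;c_4^2+(M-1)\epsilon c_4^2]$. Requiring ``sufficiently small'' to mean $(M-1)\epsilon c_4^2\le c_3^2/2$ makes these constants independent of $M$, since $\epsilon$ absorbs the $M$-dependence. Together with $c_1\le|\tilde\rho_m|\le c_2$, this yields
\begin{align}
\|\partial_\theta\mathbf a_{\mathrm b}\|^2\asymp\sum_m|\partial_\theta\bar\ell_{0,m}|^2.
\end{align}
The same argument applied to the scalar $\partial_\theta\mathbf a_{\mathrm f}^H\mathbf w_{\mathrm s}$ is the subtle point, because it is a scalar rather than a vector. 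I will instead work with the transmit-side vector $\sum_n\rho_n^*\,\partial_\theta\ell_{n,0}^*\,\mathbf h_{\mathrm B,n}$, whose squared norm is $\asymp\sum_n|\partial_\theta\ell_{n,0}|^2$ by the same Gram bound using $\beta^{(\mathrm t)}$. Its projection onto $\mathbf w_{\mathrm s}$ is then bounded above by Cauchy--Schwarz with $\|\mathbf w_{\mathrm s}\|^2\le P_{\mathrm t}$. The remaining factors $|\mathbf a_{\mathrm f}^H\mathbf w_{\mathrm s}|$ and $\|\mathbf a_{\mathrm b}\|$ are bounded above by the same Gram argument in terms of $\sum_m|\ell_{m,0}|^2$. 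Since $|\ell|\le\lambda/(4\pi d_{\min})$, and assuming bounded aggregate path gain, this gives $C_{\mathrm U}$.

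The lower bound is the main obstacle, because $\mathbf T_{\mathrm b}$ and $\mathbf T_{\mathrm f}$ could cancel. $\mathbf T_{\mathrm f}$ is parallel to $\mathbf a_{\mathrm b}$, so I will project $\partial_\theta\boldsymbol\mu$ onto the orthogonal complement of $\mathbf a_{\mathrm b}$ to isolate the return contribution, and similarly argue via the forward structure. Carrying this out requires three nondegeneracy conditions:
\begin{itemize}
\item $|\mathbf a_{\mathrm f}^H\mathbf w_{\mathrm s}|$ is bounded below, which holds for a precoder not orthogonal to the routed subspace;
\item $\partial_\theta\mathbf a_{\mathrm b}$ is not nearly collinear with $\mathbf a_{\mathrm b}$, which follows from the near-orthogonality of $\{\mathbf g_{m,\mathrm R}\}$ when $\{\partial_\theta\bar\ell_{0,m}\}$ and $\{\bar\ell_{0,m}\}$ are not proportional;
\item the forward scalar derivative captures a fixed fraction of $\sum_n|\partial_\theta\ell_{n,0}|^2$.
\end{itemize}
I would state these as the non-cancellation assumption \eqref{eq:no_cancel_condition}, specialized to this setting, and absorb them into the constant $C_{\mathrm L}$, in the spirit of the earlier proposition. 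Combining the two projections then gives $J_{\theta\theta}\ge C_{\mathrm L}\Psi(\theta_0)$, with $C_{\mathrm L}$ depending only on $c_1,\dots,c_4$, $|\alpha_0|$, $\sigma_n^2$ and the nondegeneracy constants, and not on $M$.
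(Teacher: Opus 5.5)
Your argument has the same skeleton as the paper's proof: within-path terms generate $\Psi(\theta_0)$, and inter-path terms are an $\epsilon$-controlled perturbation. The paper, however, only asserts that the within-path terms are $|\partial\ell|^2$ times bounded routing and channel factors, and that the cross terms are a constant-factor perturbation. Your version shows why neither is automatic:
\begin{itemize}
\item Because the response is bilinear, $\|\mathbf T_{\mathrm b}\|^2$ carries the aggregate factor $|\mathbf a_{\mathrm f}^H\mathbf w_{\mathrm s}|^2$ and $\|\mathbf T_{\mathrm f}\|^2$ carries $\|\mathbf a_{\mathrm b}\|^2$, so $J_{\theta\theta}$ behaves like a product of two sums.
\item The forward factor is a scalar whose cross terms contain $\mathbf h_{\mathrm B,n}^H\mathbf w_{\mathrm s}\mathbf w_{\mathrm s}^H\mathbf h_{\mathrm B,n'}$, which a bound on $|\mathbf h_{\mathrm B,n}^H\mathbf h_{\mathrm B,n'}|$ does not make small.
\item $\beta_{m,n}$ is an ensemble quantity, while $J_{\theta\theta}$ is conditional.
\item $M(M-1)$ cross terms of relative size $\epsilon$ form a bounded perturbation only if $(M-1)\epsilon=O(1)$.
\end{itemize}
Your Gram/Gershgorin bound, Cauchy--Schwarz on the forward scalar, and aggregate-gain normalization together give a sound upper bound. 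The non-degeneracy conditions you import for the lower bound are necessary, not a weakness of your route. Take $M=1$ (so the correlation condition is vacuous) and $\mathbf w_{\mathrm s}\perp\mathbf h_{\mathrm B,1}$. Then $\boldsymbol\mu(\theta)\equiv\mathbf 0$, hence $J_{\theta\theta}=0$, while generically $\Psi(\theta_0)>0$. So state these conditions explicitly as added hypotheses of a corrected proposition, rather than absorbing them into $C_{\mathrm L}$. Note also that requiring $c_3^2-(M-1)\epsilon c_4^2>0$ makes the $M\times M$ Gram matrices nonsingular. This forces $M\le\min(N_{\mathrm t},N_{\mathrm r})$ and excludes, e.g., the paper's own setting $M=16$, $N_{\mathrm t}=N_{\mathrm r}=8$.

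One step of your lower-bound sketch does not go through as written. The projection onto $\mathbf a_{\mathrm b}^{\perp}$ gives $|\alpha_0|^2|\mathbf a_{\mathrm f}^H\mathbf w_{\mathrm s}|^2\|\mathbf P^{\perp}_{\mathbf a_{\mathrm b}}\partial_\theta\mathbf a_{\mathrm b}\|^2$, and hence the return half of $\Psi$. The forward half would have to come from the component along $\mathbf a_{\mathrm b}$, namely $\alpha_0\mathbf a_{\mathrm b}\big[(\mathbf a_{\mathrm f}^H\mathbf w_{\mathrm s})\,\mathbf a_{\mathrm b}^H\partial_\theta\mathbf a_{\mathrm b}/\|\mathbf a_{\mathrm b}\|^2+\partial_\theta\mathbf a_{\mathrm f}^H\mathbf w_{\mathrm s}\big]$. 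The bracket is a single complex number that can vanish (e.g., by tuning $\mathbf w_{\mathrm s}$) while all three of your conditions hold. So ``combining the two projections'' needs one more non-cancellation hypothesis on that bracket. Alternatively, if the final hop is reciprocal, $\bar\ell_{0,m}=\ell_{m,0}$ (which the paper's model permits), the two sums in $\Psi$ coincide. The $\mathbf a_{\mathrm b}^{\perp}$ projection alone then yields $J_{\theta\theta}\ge C_{\mathrm L}\Psi(\theta_0)$ from your first two conditions and the Gram lower bound.
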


\begin{proof}
Recalling from \eqref{eq:dmu_final}, the derivative of the conditional mean vector is given by a weighted superposition of forward and return steering vector derivatives. Substituting \eqref{eq:dmu_final} into \eqref{eq:FIM_final} then yields a quadratic form containing both within-path terms and inter-path cross terms.

The within-path terms are proportional to
\begin{align}
\left|\frac{\partial \ell_{m,0}(\theta_0)}{\partial\theta_0}\right|^2~\text{and}~\left|\frac{\partial \bar{\ell}_{0,m}(\theta_0)}{\partial\theta_0}\right|^2,
\end{align}
modulated by routing gains and channel norms. Since these quantities are uniformly bounded above and below, they contribute only bounded multiplicative factors.

On the other hand, the inter-path terms are weighted by the correlation coefficients $\beta_{m,n}^{(\mathrm t)}$ and $\beta_{m,n}^{(\mathrm r)}$. Under the assumption $|\beta_{m,n}^{(\mathrm t)}|, |\beta_{m,n}^{(\mathrm r)}|\le\epsilon$ for $m\neq n$, the aggregate cross term contribution remains bounded by a constant factor perturbation of the dominant within-path terms. Consequently, the Fisher information is bounded above and below by positive constant multiples of $\Psi(\theta_0)$, which proves \eqref{eq:geometry_bound_final}.
\end{proof}

Interestingly, Proposition~\ref{prop3} shows that angular sensing performance improves when the active radiation points generate sufficiently distinct angle-dependent propagation responses. As a result, from a sensing perspective, the spatial placement of the effective radiation points is important not only for signal coverage, but also  for creating geometrically diverse angular perturbations across the routed paths.

\vspace{-2mm}
\subsection{Gain-Diversity Trade-off}
The preceding results reveal an inherent trade-off between routing gain and sensing diversity in the considered E-FAS architecture. In particular, concentrating the routing resources over a small number of high-gain paths increases the effective sensing SNR, whereas distributing the routing energy across multiple spatially distinct paths enhances the diversity of the angle-dependent sensing responses. To illustrate this behavior, consider a fixed routing power budget
\begin{align}
\sum_{m=1}^{M}\gamma_m=\Gamma,
\end{align}
where $\gamma_m=\mathbb{E}[|\rho_m|^2]$ represents the average routing power associated with the $m$-th radiation path. Under this constraint, allocating most of the routing power to a small subset of paths increases the dominant steering gain and therefore improves the effective sensing SNR defined in \eqref{eq:SNR_final}. However, strong concentration of the routing coefficients also increases the likelihood that the resulting forward and return steering structures become highly aligned, thereby reducing the effective dimensionality of the sensing response. Conversely, distributing the routing power over multiple geometrically distinct radiation points generally reduces the gain contributed by any individual path, but increases the spatial variability of the composite steering structure. As shown in the previous subsection, this geometrical diversity improves the accumulation of angular information through distinct angle-dependent perturbations across the routed paths. Therefore, E-FAS-enabled sensing exhibits a fundamental gain-diversity trade-off: routing strategies that maximize coherent signal power do not necessarily maximize angular identifiability, while highly diverse routing configurations may sacrifice coherent amplification in exchange for improved spatial resolution capability.

\vspace{-2mm}
\subsection{Extension to Joint Delay Angle Estimation}
Although the main analysis of this paper focuses on angular estimation after delay synchronization, the proposed framework can be extended to joint delay angle estimation in a straightforward manner. In particular, for a wideband sensing waveform, the conditional mean signal becomes jointly parameterized by the delay $\tau_0$ and the angle $\theta_0$. Let $\boldsymbol{\eta}=[\tau_0,\theta_0]^T$ denote the unknown parameter vector. The corresponding conditional mean observation can be expressed as
\begin{align}
\boldsymbol{\mu}(\tau_0,\theta_0)=\alpha_0\mathbf{a}_{\mathrm b}(\theta_0)\mathbf{a}_{\mathrm f}^{H}(\theta_0)\mathbf{w}_{\mathrm s}\,s_{\mathrm s}(t-\tau_0),
\end{align}
where $s_{\mathrm s}(t)$ denotes the sensing waveform.

Under the conditional Gaussian observation model, the associated FIM is given by
\begin{align}
\mathbf{J}(\boldsymbol{\eta})=
\begin{bmatrix}
J_{\tau\tau} & J_{\tau\theta}\\
J_{\theta\tau} & J_{\theta\theta}
\end{bmatrix},
\end{align}
whose entries satisfy
\begin{align}
[\mathbf{J}(\boldsymbol{\eta})]_{i,j}=\frac{2}{\sigma_n^2}\Re\!\left\{\left(\frac{\partial\boldsymbol{\mu}}{\partial\eta_i}\right)^H\left(\frac{\partial\boldsymbol{\mu}}{\partial\eta_j}\right)\right\},
\end{align}
where $\eta_1=\tau_0$ and $\eta_2=\theta_0$.

The resulting structure shows that delay estimation is primarily governed by the temporal characteristics of the sensing waveform, whereas angular estimation is mainly determined  by the spatial variation of the composite E-FAS steering structure. The off-diagonal terms $J_{\tau\theta}$ and $J_{\theta\tau}$ quantify the coupling between temporal and spatial estimation mechanisms.

\vspace{-4mm}
\section{Numerical Results}\label{sec:num}
In this section, we validate the analytical framework developed in the previous sections and investigate the sensing performance of E-FAS-enabled ISAC systems. In particular, we examine the impact of routing gain, routing-induced correlation, and spatial distribution of effective radiation points on the CRB. Unless otherwise specified, the carrier frequency is set to $f_c=28~\mathrm{GHz}$, corresponding to a wavelength $\lambda \approx 10.7~\mathrm{mm}$. The BS employs $N_{\mathrm t}=N_{\mathrm r}=8$ transmit and receive antennas with half-wavelength antenna spacing. The target angle is fixed at $\theta_0=25^\circ$, while the target range is set to
$r_0=16~\mathrm{m}$. The E-FAS radiation surface is located at $x_{\mathrm{surf}}=6~\mathrm{m}$, and the default E-FAS aperture size is $D=3~\mathrm{m}$. Unless otherwise stated, free-space propagation is assumed for all wireless links. The BS--E-FAS, E-FAS--target, target--E-FAS, and E-FAS--BS channels follow the spherical-wave propagation model introduced in Section \ref{sec:sys}, with free-space path-loss exponent equal to $2$. Furthermore, the attenuation of the guided routing medium is absorbed into the routing coefficients $\{\rho_m,\tilde{\rho}_m\}$ and the routing power budget constraint $\sum_m |\rho_m|^2 \leq \Gamma$. Furthermore, the number of effective radiation points is chosen as $M=16$, and the routing-power budget is normalized to $\Gamma=1$~(0~dB). The sensing precoder is selected as
$
\mathbf w_{\mathrm s}
=
\frac{1}{\sqrt{N_t}}
\mathbf 1_{N_t},
$
where $\mathbf 1_{N_t}$ denotes the all-ones vector of length $N_t$. 
Besides, all results are averaged over independently generated routing and geometry realizations whenever random placement is considered. To validate the derived CRB, Monte Carlo simulations are performed using the conditional observation model  with $5000$ independent noise realizations. In particular, local maximum likelihood angular estimation is implemented under additive complex Gaussian noise, and the resulting root mean square error (RMSE) is compared with the theoretical $\sqrt{\mathrm{CRB}_{\theta}}$. 

\begin{figure}[t]
\vspace{-2cm}	\centering
	\includegraphics[width=1\columnwidth]{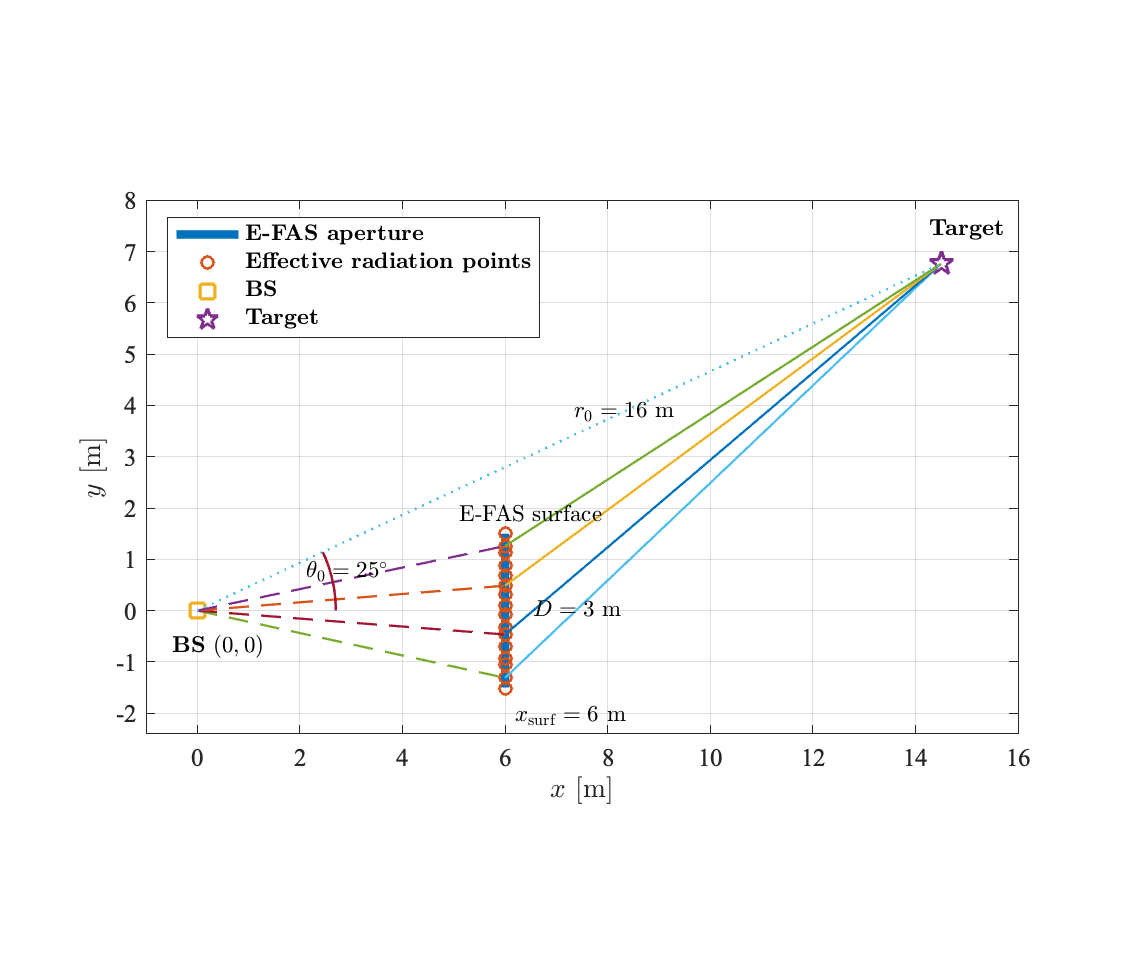}
\vspace{-1.5cm}	\caption{Representative simulation geometry used in the numerical evaluations.}
\label{fig:ge-model}
	\vspace{-2mm}
\end{figure}

Fig.~\ref{fig:ge-model} illustrates a representative simulation geometry used throughout the numerical evaluations. Unless otherwise stated, the BS is located at the origin, while the E-FAS radiation surface is positioned at $x_{\mathrm{surf}}=6~\mathrm{m}$ with aperture size $D=3~\mathrm{m}$. The target is located at $(r_0,\theta_0)=(16~\mathrm{m},25^\circ)$ and the distributed radiation points are generated over the E-FAS aperture according to the considered routing realization.

\begin{figure}[t]
\centering
\includegraphics[width=0.85\columnwidth]{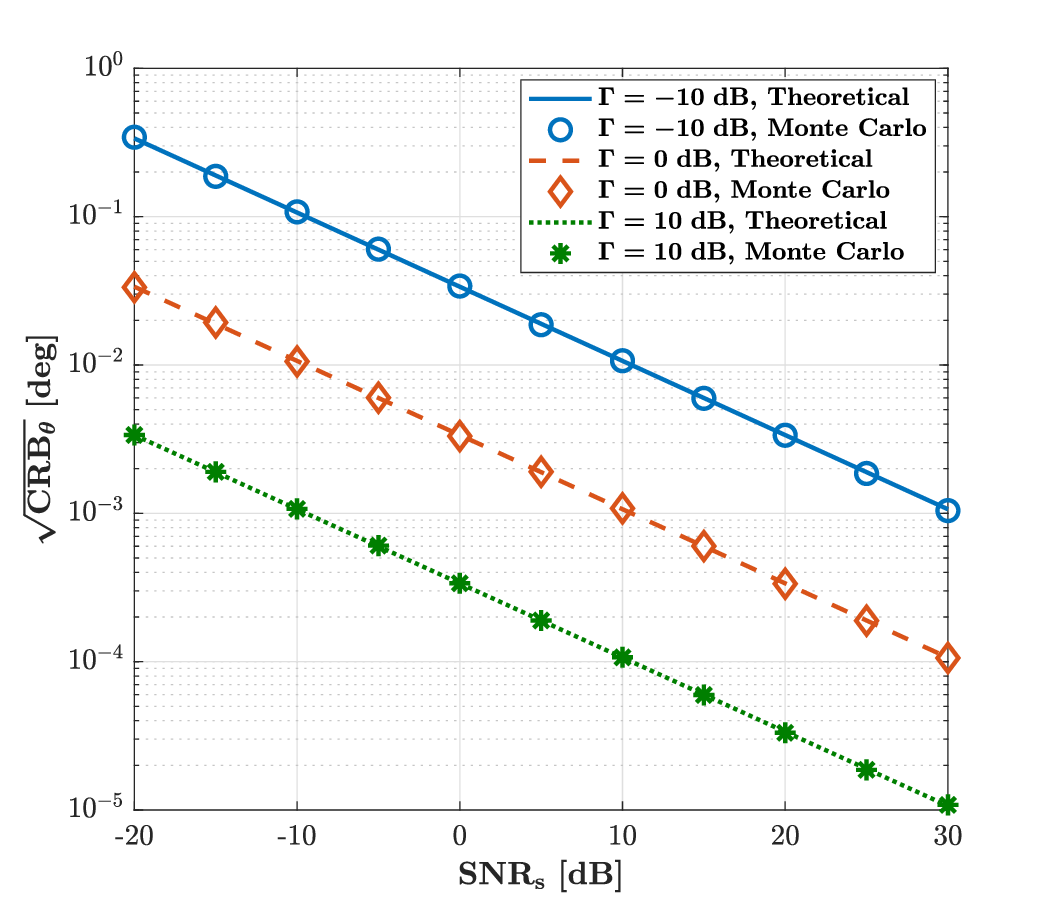}
\vspace{-2mm}\caption{Angular CRB validation versus sensing SNR $\mathrm{SNR}_{\mathrm s}$ for different routing power budgets $\Gamma$.}\label{fig:crb-snr}
\vspace{-6mm}
\end{figure}

Fig.~\ref{fig:crb-snr} illustrates the CRB validation and the impact of the routing power budget $\Gamma$ on the angular sensing performance for different sensing SNR values. The figure compares the theoretical $\sqrt{\mathrm{CRB}_{\theta}}$ obtained from the proposed E-FAS  Fisher information formulation with the Monte Carlo RMSE achieved by local maximum likelihood estimation under the conditional bidirectional sensing model. It is observed that the Monte Carlo results closely match the theoretical CRB curves over the entire SNR range and for all considered routing power budgets. This agreement confirms the correctness of the derived analytical framework and verifies that the proposed estimator operates in the efficient estimation regime.  

Moreover, the figure clearly shows that increasing the routing power budget substantially improves the angular estimation accuracy. For example, the case with $\Gamma=10$~dB achieves significantly lower angular error compared with $\Gamma=0$~dB and $\Gamma=-10$~dB over the entire SNR region. Specifically, the proposed Fisher information expression scales proportionally with the effective coherent routing gain induced by the E-FAS surface. A larger routing power budget enables stronger coherent accumulation of the routed sensing field, which increases the derivative energy of the received signal with respect to the target angle. Consequently, the Fisher information increases, leading to a lower CRB. In addition, all curves exhibit the expected monotonic decay with increasing sensing SNR. This behavior follows directly from the inverse dependence of the CRB on the noise variance. As the sensing SNR increases, the received observation becomes less noise limited, allowing more accurate extraction of angular perturbations from the received waveform. Thus, the estimation variance decreases almost linearly on the logarithmic scale. As a result, the coherent routing amplification provided by E-FAS architectures can substantially enhance sensing accuracy and reduce the fundamental estimation bound in ISAC systems.

\begin{figure}[t]
\centering
\includegraphics[width=0.85\columnwidth]{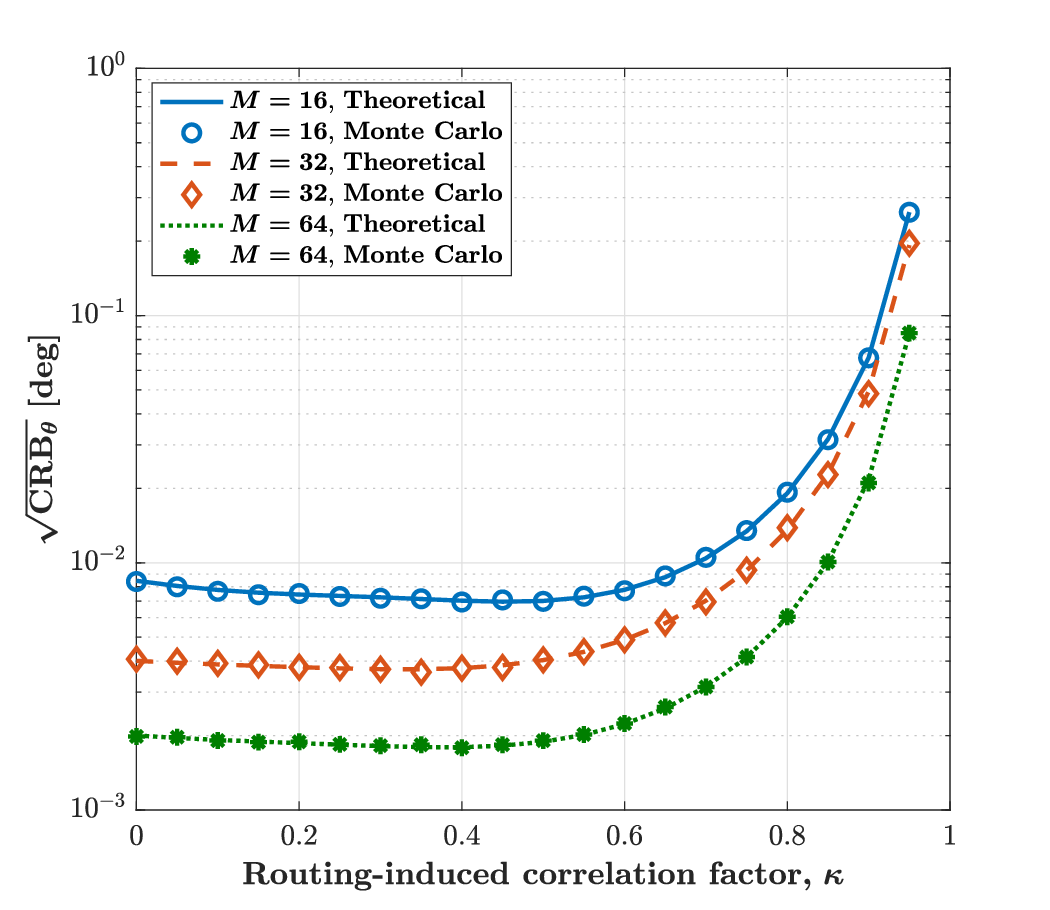}
\vspace{-2mm}\caption{Angular CRB versus routing-induced correlation factor $\kappa$ for different numbers of effective radiation points $M$.}\label{fig:crb-kappa}
\vspace{-6mm}
\end{figure}

Fig.~\ref{fig:crb-kappa} illustrates the impact of the routing-induced correlation factor $\kappa$ on the angular sensing performance for different numbers of effective radiation points $M$. We observe that the angular estimation error initially decreases slightly for small correlation levels and then increases rapidly as the routing-induced correlation becomes strong. This behavior is directly related to the effective sensing diversity created by the distributed radiation points. For weak correlation regimes, a moderate increase in $\kappa$ improves the coherent aggregation of the routed sensing field, which slightly enhances the received signal strength and reduces the CRB. However, as $\kappa$ becomes large, the effective radiation modes gradually collapse into highly correlated sensing components. As such, the spatial diversity of E-FAS decreases significantly, leading to a reduction in the effective angular sensitivity of the received waveform. This effect becomes severe when $\kappa$ approaches unity, where the distributed radiation structure behaves similarly to a highly correlated compact aperture. In this regime, the derivative energy of the conditional received signal with respect to the target angle decreases substantially, which directly reduces the Fisher information and causes a rapid increase in the CRB. The results further show that increasing the number of effective radiation points consistently improves the sensing accuracy over the entire correlation region. Larger values of $M$ provide richer spatial sampling of the propagated sensing field and generate higher effective sensing diversity. Consequently, the proposed E-FAS architecture becomes more resilient against routing-induced correlation effects, resulting in lower estimation error compared to the cases with smaller values of $M$.

\begin{figure}[t]
\centering
\includegraphics[width=0.85\columnwidth]{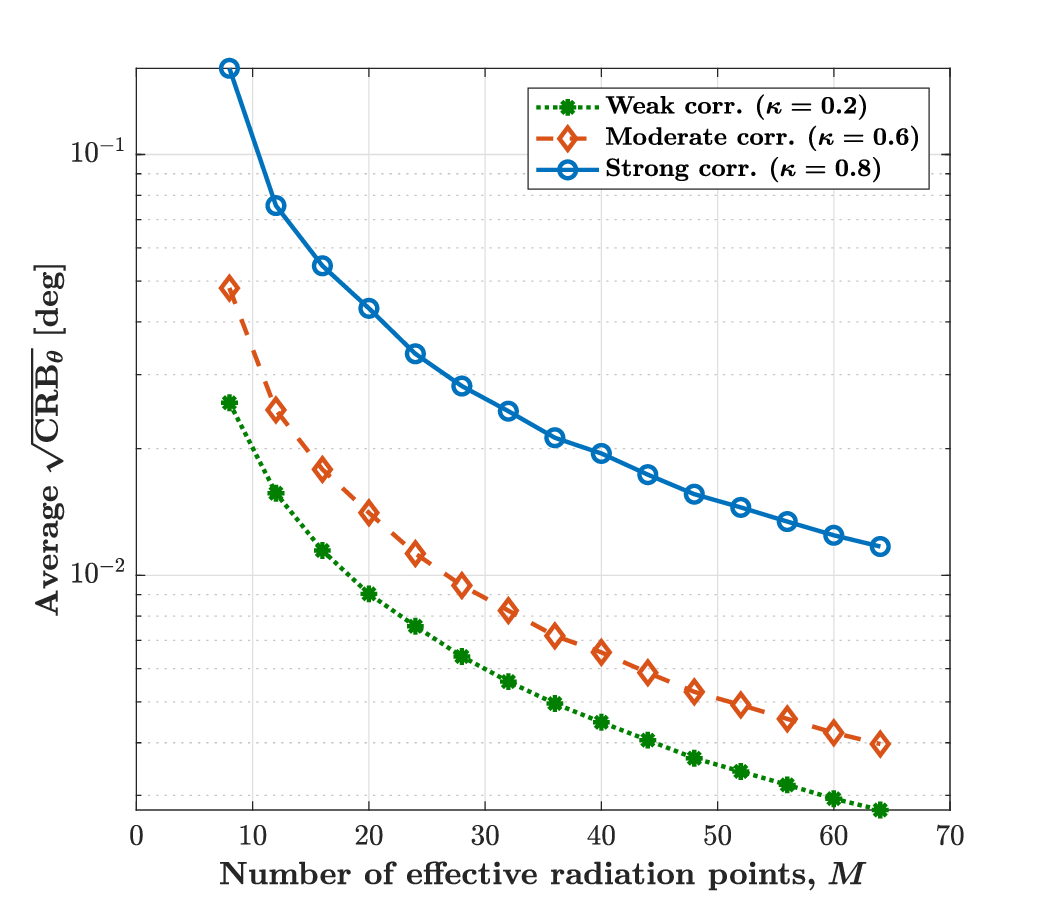}
\vspace{-2mm}\caption{Average CRB versus the number of effective radiation points $M$ for different routing-induced correlation levels.}\label{fig:crb-m}
\vspace{-4mm}
\end{figure}

Fig.~\ref{fig:crb-m} studies the impact of the number of effective radiation points $M$ on the average angular CRB for different routing-induced correlation levels. The results show that increasing the number of effective radiation points consistently improves the sensing accuracy for all considered correlation regimes. It is observed that the angular CRB decreases monotonically as $M$ increases. This behavior follows directly from the proposed Fisher information formulation. Specifically, a larger number of effective radiation points creates richer spatial sampling of the propagated sensing field and increases the dimensionality of the distributed sensing aperture. Consequently, the received signal becomes more sensitive to angular perturbations of the target, which increases the derivative energy of the conditional received signal with respect to the target angle. As a result, the Fisher information increases and the CRB decreases.

The figure also shows that routing-induced correlation has a significant impact on the achievable sensing performance. In the weak correlation regime, e.g., $\kappa=0.2$, the distributed radiation points contribute nearly independent sensing modes, leading to a large effective sensing diversity and substantially lower estimation error. In contrast, strong correlation, e.g., $\kappa=0.8$, causes the reradiated sensing components to become increasingly aligned, which reduces the effective sensing rank of the E-FAS aperture. Therefore, although increasing $M$ still improves the sensing accuracy under strong correlation, the achievable gain becomes considerably smaller compared with the weakly correlated regime. Moreover, the improvement rate gradually decreases for large values of $M$, particularly under moderate and strong correlation conditions. This indicates the existence of a diminishing return region, where adding additional radiation points provides only limited sensing diversity enhancement due to correlation induced mode coupling among the reradiated paths. Hence, the results in this figure strongly support the theoretical claim that the sensing capability of E-FAS architectures is governed not only by the number of reradiated sensing paths, but also by the effective spatial independence among the generated radiation modes.

\begin{figure}[t]
\centering
\includegraphics[width=0.85\columnwidth]{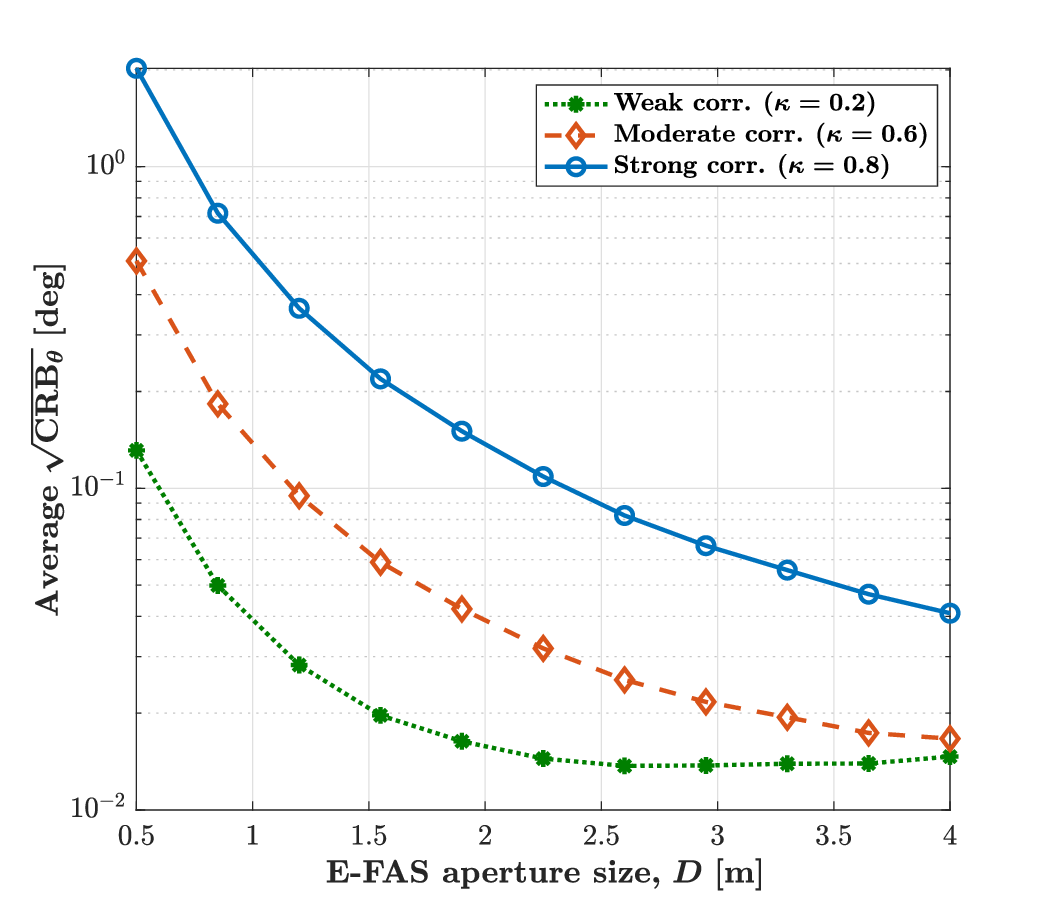}
\vspace{-2mm}\caption{Average CRB versus the E-FAS aperture size $D$ for different routing-induced correlation levels.}\label{fig:crb-d}
\vspace{-6mm}
\end{figure}

Now, Fig.~\ref{fig:crb-d} illustrates the impact of the E-FAS aperture size $D$ on the average angular sensing CRB for different routing-induced correlation levels. The results indicate that enlarging the E-FAS radiation aperture substantially improves the sensing accuracy, particularly under moderate and strong correlation conditions. As can be observed, the angular CRB decreases monotonically as the aperture size increases. This behavior is due to the spatial resolution capability of the distributed reradiation surface. A larger aperture generates wider spatial separation among the effective radiation points, which enhances the angular discrimination capability of the received sensing field. Consequently, the conditional received signal becomes more sensitive to small target angle variations, resulting in a larger Fisher information and a lower CRB.

On the other hand, the figure further shows that the performance improvement becomes more significant under strong routing-induced correlation. Specifically, when the reradiated sensing modes are highly correlated, increasing the aperture size effectively reduces the spatial coupling among the distributed radiation components. Therefore, aperture expansion partially restores the effective sensing diversity of the E-FAS architecture and mitigates sensing mode collapse. In contrast, under weak correlation conditions, the distributed radiation points are already sufficiently decorrelated even for moderate aperture sizes. As a result, the achievable diversity gain saturates after a certain aperture region, leading to a gradual flattening of the CRB curve for large values of $D$. This trend indicates that once the effective sensing manifold becomes sufficiently spatially resolved, further aperture expansion provides only limited additional Fisher information.

\begin{figure}[t]
\centering
\includegraphics[width=0.85\columnwidth]{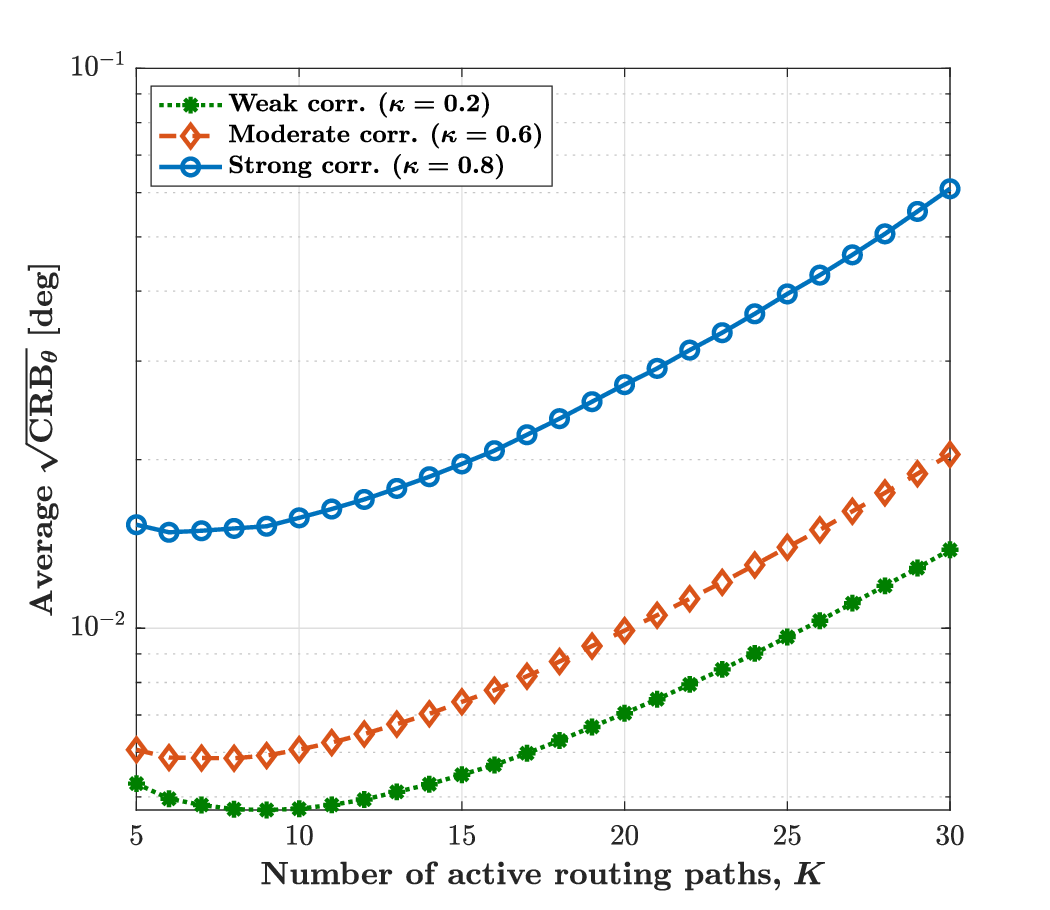}
\vspace{-2mm}\caption{Average CRB versus the number of active routing paths $K$  for different routing-induced correlation levels.}\label{fig:crb-k}
\vspace{-6mm}
\end{figure}

Fig.~\ref{fig:crb-k} shows the gain-diversity trade-off of the proposed E-FAS sensing architecture as a function of the number of active routing paths $K$ for different routing-induced correlation levels. The figure shows the average angular CRB achieved under different routing activation regimes. It is observed that the angular CRB initially decreases slightly for small values of $K$ and then increases progressively as the number of active routing paths becomes large. This behavior demonstrates the fundamental gain-diversity trade-off introduced by the proposed routing architecture. For small and moderate values of $K$, activating additional routing paths increases the spatial sensing diversity of the reradiated field and enriches the effective sensing manifold. Consequently, the received signal becomes more sensitive to target angle variations, which improves the Fisher information and reduces the CRB.

However, when the number of active routing paths becomes large, the routing power budget must be distributed among an increasing number of reradiated components. As a result, the coherent sensing gain contributed by each individual path decreases. Moreover, the additional routing paths become increasingly correlated, particularly under moderate and strong correlation regimes. This effect reduces the effective sensing rank of the distributed aperture and weakens the angular sensitivity of the received signal. Therefore, beyond a certain operating point, activating more routing paths degrades the sensing performance and increases the CRB. The figure further shows that the degradation becomes much more severe under strong routing-induced correlation conditions. Specifically, when $\kappa$ is large, the reradiated sensing modes become highly aligned, causing significant sensing mode collapse. Consequently, the achievable diversity gain provided by increasing $K$ becomes very limited, while the routing power splitting loss continues to increase. These results confirm the theoretical claim that the sensing performance of E-FAS-enabled ISAC systems is governed by a nontrivial trade-off between coherent routing gain and effective sensing diversity. Therefore, the number of active routing paths must be carefully optimized according to the underlying routing correlation conditions.

\begin{figure}[t]
\centering
\includegraphics[width=0.85\columnwidth]{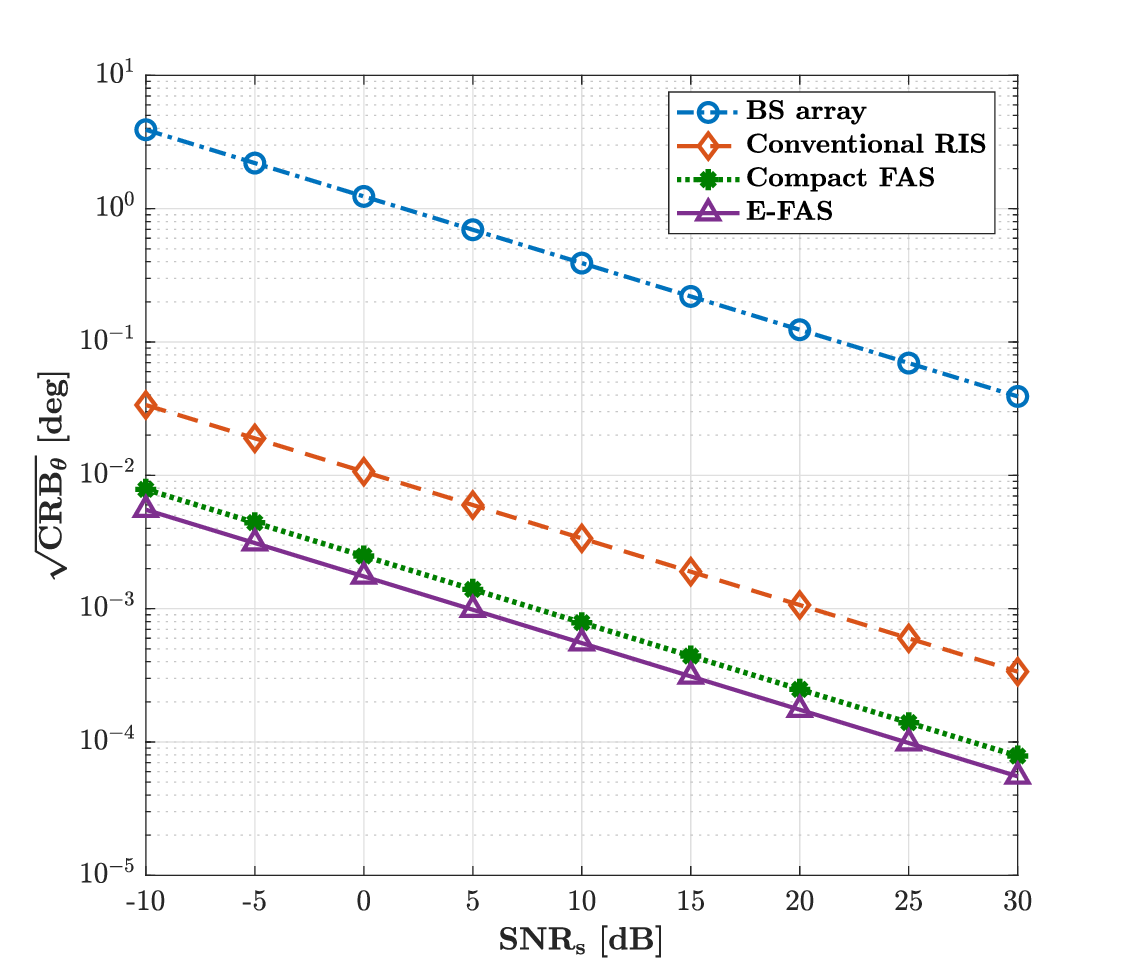}
\vspace{-2mm}\caption{Benchmark comparison of angular sensing performance versus sensing SNR $\mathrm{SNR}_{\mathrm s}$ for different sensing architectures.}\label{fig:crb-snrb}
\vspace{-6mm}
\end{figure}

Fig.~\ref{fig:crb-snrb} compares the sensing performance of the proposed E-FAS with several benchmarks, including a conventional BS array, a conventional RIS, and a compact FAS architecture. For a fair comparison, all schemes employ the same carrier frequency, BS antenna configuration, transmit sensing power, target geometry, number of effective radiation points $M=16$, and routing-induced correlation level $\kappa=0.45$. The RIS benchmark is modeled as a compact reradiation surface with aperture size $D=0.8~\mathrm{m}$, while the compact FAS benchmark employs a moderate aperture of $D=1.5~\mathrm{m}$. In contrast, the proposed E-FAS architecture utilizes a larger distributed aperture of $D=3~\mathrm{m}$. The selected aperture sizes are intended to reflect the characteristic spatial scales of the considered architectures. Conventional RIS deployments are generally constrained to compact reradiation surfaces, whereas FAS-based architectures enable larger effective apertures through distributed radiation points. The proposed E-FAS further exploits routing-enabled distributed reradiation to realize a substantially larger sensing aperture. To isolate the impact of sensing geometry and aperture diversity, the benchmark comparison is performed under geometry-normalized channels, where all architectures are evaluated under the same reference sensing SNR. It is observed that all considered architectures exhibit decreasing angular CRB as the sensing SNR increases, which is consistent with the inverse dependence of the Fisher information on the noise variance. However, substantial performance differences appear among the considered architectures due to their different effective sensing apertures and spatial sensing capabilities. The conventional BS array exhibits the highest CRB over the entire SNR range since its sensing capability is fundamentally limited by the physical aperture of the colocated antenna array. Without distributed reradiation support, the received sensing field provides limited angular diversity and weak spatial resolution. The conventional RIS significantly improves the sensing accuracy compared with the BS array by introducing reradiated sensing components. Nevertheless, its performance remains constrained by the compact reradiation aperture, which limits the achievable spatial discrimination capability. The compact FAS architecture further improves the sensing performance by increasing the effective reradiation aperture. The larger spatial separation among the radiation points enhances the sensitivity of the received signal to target-angle variations and consequently increases the Fisher information. The proposed E-FAS architecture achieves the lowest CRB among the considered configurations due to its substantially larger distributed aperture. The enlarged aperture creates a richer sensing manifold and provides stronger angular discrimination capability, leading to a more informative received sensing field and improved estimation accuracy.  

\begin{figure}[t]
\centering
\includegraphics[width=0.85\columnwidth]{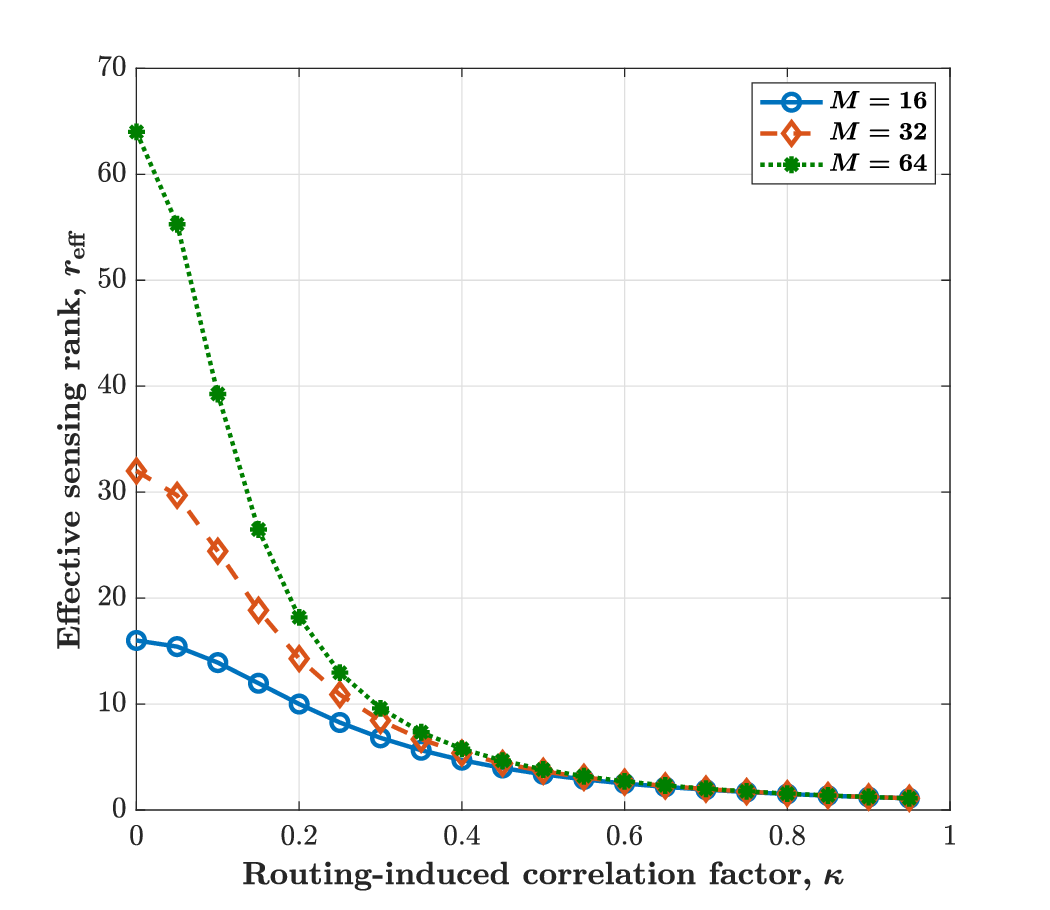}
\vspace{-2mm}\caption{Routing-induced sensing mode collapse versus the routing-induced correlation factor $\kappa$ for different numbers of effective radiation points $M$.}\label{fig:r-kappa}
\vspace{-6mm}
\end{figure}

To further visualize the impact of routing-induced correlation on the effective sensing dimensionality, we define the effective sensing rank as $r_{\mathrm{eff}}=\frac{\left(\mathrm{tr}(\mathbf R_{\mathrm s})\right)^2}{\mathrm{tr}\left(\mathbf R_{\mathrm s}^2\right)}$, where $\mathbf R_{\mathrm s}$ denotes the routing-dependent sensing covariance matrix. The effective sensing rank provides a quantitative measure of the number of effectively independent sensing modes supported by the distributed reradiation architecture. Fig. \ref{fig:r-kappa} illustrates the impact of routing-induced correlation on the effective sensing rank of the proposed E-FAS sensing architecture for different numbers of effective radiation points $M$. The results in this figure illustrate how increasing spatial correlation among reradiated sensing components gradually reduces the effective dimensionality of the sensing field. It is observed that the effective sensing rank decreases monotonically as the routing-induced correlation factor $\kappa$ increases. This behavior directly follows from the progressive alignment of the reradiated sensing modes. When the routing correlation is weak, the radiation points generate nearly independent sensing components, resulting in a high-dimensional sensing manifold and large effective sensing rank. In this regime, increasing the number of effective radiation points substantially enriches the spatial DoFs of the propagated sensing field. However, as $\kappa$ increases, the reradiated sensing modes become increasingly correlated and begin to collapse into a smaller number of dominant spatial components. Consequently, many reradiated paths contribute highly redundant sensing information, leading to a rapid reduction in the effective sensing rank. This effect becomes particularly severe in the strong correlation regime, where the effective sensing ranks corresponding to different values of $M$ converge toward similar low-dimensional operating points.

Furthermore, this figure demonstrates that larger values of $M$ provide higher effective sensing rank under weak and moderate correlation conditions. Nevertheless, this advantage gradually diminishes as the correlation strength increases. This observation once again confirms that simply increasing the number of reradiated sensing paths is insufficient to guarantee high sensing diversity unless adequate spatial independence is maintained among the generated reradiation modes. 

\vspace{-4mm}
\section{Conclusion}\label{sec:con}
This paper developed a routing-dependent analytical framework for E-FAS-enabled ISAC systems by explicitly modeling the bidirectional propagation structure created by distributed reradiation and SW routing. Based on the proposed conditional sensing model, closed-form Fisher information expressions and the angular CRBs were derived to characterize the fundamental sensing limits of E-FAS architectures. The analysis demonstrated that the sensing performance of E-FAS systems is fundamentally governed by the interplay among SW routing gain, routing-induced correlation, and distributed aperture geometry. In particular, increasing the coherent routing gain was shown to improve sensing accuracy, while strong routing-induced correlation reduces the effective sensing dimensionality and degrades angular estimation performance. The results further revealed that enlarging the distributed E-FAS aperture and increasing the number of effective radiation points can substantianlly enhance sensing resolution by improving spatial diversity. This paper also identified a fundamental gain-diversity trade-off in E-FAS-enabled sensing. Activating more routing paths improves spatial sensing diversity only up to a certain operating region, beyond which routing power splitting and mode correlation begin to dominate the sensing behavior. Numerical results verified the accuracy of the proposed analytical framework and demonstrated close agreement between the derived CRB expressions and Monte Carlo simulations under various routing and propagation conditions.\vspace{-5mm}

\bibliographystyle{IEEEtran}

\end{document}